\documentclass[nocopyrightspace,computermodern]{sigplanconf}
\clubpenalty=10000
\widowpenalty = 10000
\usepackage{stmaryrd}
\usepackage{amsmath}
\usepackage{amssymb}
\usepackage{comment}
\usepackage{float}
\usepackage{xspace}
\usepackage{mathabx}
\usepackage{colonequals}
\usepackage[noconfig]{ntheorem}
\usepackage{tikz}
\usepackage{subfigure}
\usepackage{url}
\newtheorem{theorem}{Theorem}[section]
\newtheorem{lemma}[theorem]{Lemma}

\newtheorem{corollary}[theorem]{Corollary}
\newtheorem{definition}[theorem]{Definition}
\newtheorem{example}[theorem]{Example}
\newtheorem*{proof}{Proof}

\newcommand{\SAT}{\mathit{3SAT}}
\newcommand{\CONS}{\mathit{CONS}}
\newcommand{\CS}{\mathit{CS}}
\newcommand{\ME}{\mathit{ME}}
\newcommand{\DME}{\mathit{DME}}
\newcommand{\words}{W_\Sigma}

\newcommand{\shuffle}{\mathbin{\|}}
\renewcommand{\shuffle}{\mathbin{|\hspace{-0.1em}|}}

\newcommand{\lab}{\mathit{lab}}
\renewcommand{\root}{\mathit{root}}

\newcommand{\child}{\mathit{child}}
\newcommand{\ch}{\mathit{ch}}

\newcommand{\true}{\mathit{true}}
\newcommand{\false}{\mathit{false}}

\newcommand{\Tree}{\ensuremath{\mathit{Tree}}}

\newcommand{\ms}{\mathit{MS}}
\newcommand{\dms}{\mathit{DMS}}
\newcommand{\learner}{\mathit{learner}}
\newcommand{\maxcl}{\mathit{max\_clique\_partition}}
\newcommand{\minfit}{\mathit{min\_fit\_multiplicity}}

\newcommand{\cs}{\mathit{CS}}

\def\qed {{                
   \parfillskip=0pt        
   \widowpenalty=10000     
   \displaywidowpenalty=10000  
   \finalhyphendemerits=0  
                           %
   \leavevmode             
   \unskip                 
   \nobreak                
   \hfil                   
   \penalty50              
   \hskip.2em              
   \null                   
   \hfill                  
   $\square$
                           %
   \par}}                  

\floatstyle{ruled}
\newfloat{float@algorithm}{htbp}{loa}
\floatname{float@algorithm}{Algorithm}
\newcounter{LineCounter@algorithm} 

\newenvironment{BasicCommands@algorithm}{%
  \newcommand{\TAB}{\makebox[4ex][r]{}}%
  \newcommand{\ALGORITHM}{\textbf{algorithm}\xspace}%
  \newcommand{\INPUT}{\textbf{Input}\xspace}%
  \newcommand{\OUTPUT}{\textbf{Output}\xspace}%
  \newcommand{\LET}{\textbf{let}\xspace}%
  \newcommand{\IF}{\textbf{if}\xspace}%
  \newcommand{\THEN}{\textbf{then}\xspace}%
  \newcommand{\FOR}{\textbf{for}\xspace}%
  \newcommand{\DO}{\textbf{do}\xspace}%
  \newcommand{\RETURN}{\textbf{return}\xspace}%
}{%
}

\newenvironment{algorithm*}%
{%
\begin{BasicCommands@algorithm}%
\list{}{\itemindent 0em%
        \listparindent\itemindent
        \rightmargin  \leftmargin}%
\item\relax
}{%
\endlist
\end{BasicCommands@algorithm}%
}

\newenvironment{algorithm}%
{%
  \newcommand{\ResetLineCounter}{%
    \setcounter{LineCounter@algorithm}{0}%
  }%
  \newcommand{\LN}{%
    \makebox[0pt][l]{%
      \makebox[0pt][l]{%
        \addtocounter{LineCounter@algorithm}{1}%
      }%
      \makebox[10.75pt][r]{
        \fontsize{7}{10}%
        \selectfont%
        \arabic{LineCounter@algorithm}:%
      }%
    }%
    \TAB%
  }%
  \begin{BasicCommands@algorithm}%
  \begin{float@algorithm}%
    \ResetLineCounter%
}%
{%
  \end{float@algorithm}%
  \end{BasicCommands@algorithm}%
}

\begin{document}
\title{Learning Schemas for Unordered XML}
\authorinfo{Radu Ciucanu}
           {University of Lille \& INRIA, France}
           {radu.ciucanu@inria.fr}
\authorinfo{S\l awek Staworko}
           {University of Lille \& INRIA, France}
           {slawomir.staworko@inria.fr}
\maketitle

\begin{abstract}
We consider unordered XML, where the relative order among siblings is ignored, and we investigate the problem of learning schemas from examples given by the user.
We focus on the schema formalisms proposed in~\cite{BoCiSt13}: \emph{disjunctive multiplicity schemas} (DMS) and its restriction, \emph{disjunction-free multiplicity schemas} (MS).
A learning algorithm takes as input a set of XML documents
which must satisfy the schema (i.e., \emph{positive examples}) and a set of XML documents which must not satisfy the schema (i.e., \emph{negative examples}), and returns a schema consistent with the examples.
We investigate a learning framework inspired by Gold~\cite{Gold67}, where a learning algorithm should be \emph{sound} i.e., always return a schema consistent with the examples given by the user, and \emph{complete} i.e., able to produce every schema with a sufficiently rich set of examples.
Additionally, the algorithm should be \emph{efficient} i.e., polynomial in the size of the input.
We prove that the DMS are learnable from positive examples only, but they are not learnable when we also allow negative examples.
Moreover, we show that the MS are learnable in the presence of positive examples only, and also in the presence of both positive and negative examples.
Furthermore, for the learnable cases, the proposed learning algorithms return minimal schemas consistent with the examples.
\end{abstract}




\section{Introduction}
\begin{figure}
  \begin{tikzpicture}[yscale=0.875]\small
		\node at (-4,-1) (a) {};
		\node at (0,-1) (n11) {\sl book};
		\node at (-2.5, -2) (m111) {\sl author}edge[-] (n11);
		\node at (-2.5,-2.8) {\scriptsize``$\mathit{C.\,Papadimitriou}$''} edge[-](m111);

		\node at (2.5, -2.1) (m112) {\sl year}edge[-] (n11);
		\node at (2.5,-2.8) {\scriptsize``$\mathit{1994}$''} edge[-] (m112);

		\node at (0, -2) (m113) {\sl title}edge[-] (n11);
		\node at (0,-2.8) {\scriptsize``$\mathit{Computational}$} edge[-] (m113);
		\node at (0,-3.2) {\scriptsize $\mathit{complexity}$''};
  \end{tikzpicture}
\\[0.35cm]
  \begin{tikzpicture}[yscale=0.875]\small
		\node at (-4,-1) (a) {};
		\node at (0,-1) (n10) {\sl book};
		\node at (2.5, -2) (m111) {\sl author}edge[-] (n10);
		\node at (0, -2) (m115) {\sl author}edge[-] (n10);
		\node at (-2.5, -2) (m112) {\sl title}edge[-] (n10);
		\node at (2.5,-2.8) {\scriptsize ``$\mathit{U.\,Vazirani}$''} edge[-](m111);
		\node at (0,-2.8) {\scriptsize ``$\mathit{M.\,Kearns}$''} edge[-](m115);
		\node at (-2.5,-2.8) {\scriptsize ``$\mathit{Computational}$} edge[-] (m112);
		\node at (-2.5,-3.2) {\scriptsize $\mathit{learning\,theory}$''};
  \end{tikzpicture}
\\[0.35cm]
	\begin{tikzpicture}[yscale=0.875,xscale=1.4]\small
		\node at (0,-1) (r) {\sl book};
		\node at (1.2, -2) (n1) {\sl editor} edge[-] (r);
		\node at (1.2, -2.8)  {\scriptsize ``$\mathit{A.\,Bonifati}$''} edge[-] (n1);
		\node at (2.5, -2) (n2) {\sl editor} edge[-] (r);
		\node at (2.5, -2.8)  {\scriptsize ``$\mathit{Z.\,Bellahsene}$''} edge[-] (n2);
		\node at (0.1, -2) (n3) {\sl editor} edge[-] (r);
		\node at (0.1, -2.8)  {\scriptsize ``$\mathit{E.\,Rahm}$''} edge[-] (n3);
		\node at (-1.3, -2) (n4) {\sl title} edge[-] (r);
		\node at (-1.3, -2.8)  {\scriptsize ``$\mathit{Schema\,matching}$} edge[-] (n4);
		\node at (-1.3, -3.2)  {\scriptsize $\mathit{and\,mapping}$''};
		\node at (-2.5, -2.1) (n5) {\sl year} edge[-] (r);
		\node at (-2.5, -2.8)  {\scriptsize ``$\mathit{2011}$''} edge[-] (n5);
	\end{tikzpicture}

  \caption{\label{fig:dblp}Three XML documents storing information about books.}
\end{figure}
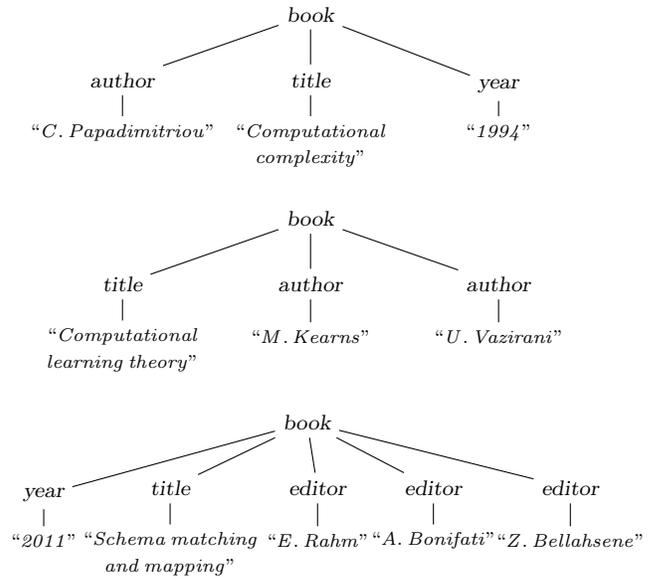
When XML is used for \emph{document-centric} applications, the
relative order among the elements is typically important e.g., the
relative order of paragraphs and chapters in a book. On the other
hand, in case of \emph{data-centric} XML applications, the order among
the elements may be unimportant~\cite{AbBoVi12}. In this paper we
focus on the latter use case. As an example, take in Figure~\ref{fig:dblp} three XML documents storing information about books.
While the order of the elements {\sl title}, {\sl year}, {\sl author}, and {\sl editor} may differ from one {\sl book} to
another, it has no impact on the semantics of the data stored in this
semi-structured database.

A \emph{schema} for XML is a description of the type of admissible
documents, typically defining for every node its content model
i.e., the children nodes it must, may, or cannot contain. 
In this paper we study the problem of \emph{learning} unordered schemas from document examples given by the user.
For instance, consider the three XML documents from Figure~\ref{fig:dblp} and assume that the user wants to obtain a schema which is satisfied by all the three documents.
A desirable solution is a schema which allows a {\sl book} to have, in any order, exactly one {\sl title}, optionally one {\sl year}, and either at least one {\sl author} or at least one {\sl editor}.

Studying the theoretical foundations of learning unordered schemas has several practical motivations.
A schema serves as a reference for users who do not know yet the structure of the XML document, and attempt to query or modify its contents.
If the schema is not given explicitly, it can be learned from document examples and then read by the users.
From another point of view, Florescu~\cite{Florescu05} pointed out the need to automatically infer good-quality schemas and to apply them in the process of \emph{data integration}.
This is clearly a data-centric application, therefore unordered schemas might be more appropriate.
Another motivation of learning the unordered schema of a XML collection is \emph{query minimization}~\cite{ACLS02} i.e., given a query and a schema, find a smaller yet equivalent query in the presence of the schema.
Furthermore, we want to use inferred unordered schemas and optimization techniques to boost the learning algorithms for twig queries~\cite{StWi12},
which are order-oblivious.

Previously, schema learning has been studied from \emph{positive examples} only i.e., documents which must satisfy the schema.
For instance, we have already shown a schema learned from the three documents from Figure~\ref{fig:dblp} given as positive examples.
However, it is conceivable to find applications where \emph{negative examples} (i.e., documents that must not satisfy the schema) might be useful.
For instance, assume a scenario where the schema of a data-centric XML collection evolves over time and some documents may become obsolete w.r.t.\ the new schema. 
A user can employ these documents as negative examples to extract the new schema of the collection.
Thus, the \emph{schema maintenance}~\cite{Florescu05} can be done incrementally, with little feedback needed from the user.
This kind of application motivates us to investigate the problem of learning unordered schemas when we also allow negative examples.

We focus our research on learning the unordered schema formalisms recently proposed in~\cite{BoCiSt13}: the \emph{disjunctive multiplicity schemas} (DMS) and its restriction, \emph{disjunction-free multiplicity schemas} (MS).
While they employ a
user-friendly syntax inspired by DTDs, they define unordered content
model only, and, therefore, they are better suited for unordered XML. 
They also retain much of the expressiveness of DTDs   without an increase in computational complexity.
Essentially, a DMS is a
set of rules associating with each label the possible number of
occurrences for all the allowed children labels by using
\emph{multiplicities}: ``$*$'' (0 or more occurrences), ``$+$'' (1 or
more), ``$?$'' (0 or 1), ``$1$'' (exactly one occurrence; often omitted
for brevity).  Additionally, alternatives can be specified using
restricted \emph{disjunction} (``$\mid$'') and all the conditions are
gathered with \emph{unordered concatenation} (``$\shuffle$''). 
For example, the following schema is satisfied by the three documents from Figure~\ref{fig:dblp}.
\[
{\sl book} \rightarrow  {\sl title} \shuffle {\sl year}^? \shuffle ({\sl author}^+ \mid {\sl editor}^+).
\]
This DMS allows a {\sl book} to have, in any order, exactly one {\sl title}, optionally one {\sl year}, and either at least one {\sl author} or at least one {\sl editor}.
Moreover, this is a \emph{minimal} schema satisfied by the documents from Figure~\ref{fig:dblp} because it captures the most specific schema satisfied by them.
On the other hand, the following schema is also satisfied by the documents from Figure~\ref{fig:dblp}, but it is more general:
\[
{\sl book} \rightarrow {\sl title} \shuffle {\sl year}^? \shuffle {\sl author}^* \shuffle {\sl editor}^*.
\]
This schema allows a {\sl book} to have, in any order, exactly one {\sl title}, optionally one {\sl year}, and any number of {\sl author}'s and {\sl editor}'s.
It is not minimal because it accepts a {\sl book} having at the same time {\sl author}'s and {\sl editor}'s, unlike the first example of schema.
Moreover, the second schema is a MS because it does not use the disjunction operation.

In this paper we address the problem of learning DMS and MS from examples given by the user.
We propose a definition of the learnability influenced by computational learning theory~\cite{KeVa94}, in particular by the inference of languages~\cite{Gold67,Higuera97}.
A learning algorithm takes as input a set of XML documents which must satisfy the schema (i.e., \emph{positive examples}), and a set of XML documents which must not satisfy the schema (i.e., \emph{negative examples}).
Essentially, a class of schemas is \emph{learnable} if there exists an algorithm which takes as input a set of examples given by the user and returns a schema which is consistent with the examples.
Moreover, the learning algorithm should be \emph{sound} i.e., always return a schema consistent with the examples given by the user, \emph{complete} i.e., able to produce every schema with a sufficiently rich set of examples, and \emph{efficient} i.e., polynomial in the size of the input.
Our approach is novel in two directions:
\begin{itemize}
\item Previous research on schema learning has been done in the context of ordered XML, typically on learning restricted classes of regular expressions as content models of the DTDs.
We focus on learning unordered schema formalisms and the results are positive: the DMS and the MS are learnable from positive examples only.
\item The learning frameworks investigated before in the literature typically infer a schema using a collection of documents serving as positive examples.
We study the impact of negative examples in the process of schema learning.
In this case, the learning algorithm should return a schema satisfied by all the positive examples and by none of the negative ones.
We show that the MS are learnable in the presence of both positive and negative examples, while the DMS are not.
\end{itemize}
We summarize our learnability results in Table~\ref{tab:learnability}.
For the learnable cases, we propose learning algorithms which return a minimal schema consistent with the examples.
\begin{table}[tbh]
\begin{center}
\begin{small}
\begin{tabular}{|c|c|c|}
\hline
\emph{Schema formalism} & \emph{+ examples only} & \emph{+ and - examples} \\\hline
DMS & {\bf Yes}~(Th.~\ref{th:dms:pos}) & No~(Th.~\ref{th:dms:pos-neg})\\\hline
MS & {\bf Yes}~(Th.~\ref{th:ms:pos}) & {\bf Yes}~(Th.~\ref{th:ms:posneg})\\\hline
\end{tabular}
\end{small}
\end{center}
\vspace{-1pt}
\caption{\label{tab:learnability}Summary of learnability results.}
\end{table}

\noindent{\bf Related work.}
The \emph{Document Type Definition (DTD)}, the most widespread XML schema formalism \cite{GrMa11, BeNeVa04}, is essentially a set of rules associating with each label a regular expression that defines the admissible sequences of children.
Therefore, learning DTDs reduces to learning regular expressions.
Gold~\cite{Gold67} showed that the entire class of regular languages is not identifiable in the limit.
Consequently, research has been done on restricted classes of regular expressions which can be efficiently learnable~\cite{MiAhCh03}.
Hegewald et al.~\cite{HeNaWe06} extended the approach from~\cite{MiAhCh03} and proposed a system which infers one-unambiguous regular expressions~\cite{BuWo98} as the content models of the labels.
Garofalakis et al.~\cite{GGRSS03} designed a practical system which infers concise and semantically meaningful DTDs from document examples.
Bex et al.~\cite{BNST06, BNSV10} proposed learning algorithms for two classes of regular expressions which capture many practical DTDs and are succinct by definition: \emph{single occurrence regular expressions} (SOREs) and its subclass consisting of \emph{chain regular expressions} (CHAREs).
Bex et al.~\cite{BGNV10} also studied learning algorithms for the subclass of deterministic regular expressions in which each alphabet symbol occurs at most $k$ times ($k$-OREs).
More recently, Freydenberger and K{\"o}tzing~\cite{FrKo13} proposed more efficient algorithms for the above mentioned restricted classes of regular expressions.

Since the DMS disallow repetitions of symbols among the disjunctions, they can be seen as restricted SOREs interpreted under commutative closure i.e., an unordered collection of children matches a regular expression if there exists an ordering that matches the regular expression in the standard way.
The algorithms proposed for the inference of SOREs~\cite{BNSV10,FrKo13} are typically based on constructing an automaton and then transforming it into an equivalent SORE.
Being based on automata techniques, the algorithms for learning SOREs take ordered input, therefore an additional input that the DMS do not have i.e., the order among the labels.
For this reason, we cannot reduce learning DMS to learning SOREs.
Consequently, we have to investigate new techniques to solve the problem of learning unordered schemas.
Moreover, all the existing learning algorithms take into account only positive examples.

We also mention some of the related work on learning schema formalisms more expressive than DTDs. 
\emph{XML Schema}, the second most widespread schema formalism~\cite{GrMa11,BeNeVa04}, allow the content model of an element to depend on the context in which it is used, therefore it is more difficult to learn.
Bex et al.~\cite{BeNeVa07} proposed efficient algorithms to automatically infer a concise XML Schema describing a given set of XML documents.
In a different approach, Chidlovskii~\cite{Chidlovskii01} used \emph{extended context-free grammars} to model schemas for XML and proposed a schema extraction algorithm.\\

\noindent{\bf Organization.} This paper is organized as follows.
In Section~\ref{sec:preliminaries} we present preliminary notions.
In Section~\ref{sec:framework} we formally define the learning framework.
In Section~\ref{sec:dms:pos} and Section~\ref{sec:ms} we present the learnability results for DMS and MS, respectively, when only positive examples are allowed.
In Section~\ref{sec:dms:pos:neg} we discuss the impact of negative examples on learning.
Finally, we summarize our results and outline further directions in Section~\ref{sec:conclusions}.

\section{Preliminaries}\label{sec:preliminaries}
Throughout this paper we assume an alphabet $\Sigma$ which
is a finite set of symbols.
We also assume that $\Sigma$ has a total order $<_\Sigma$, that can be tested in constant time.\\

\noindent {\bf Trees.} We model XML documents with unordered labeled
trees.  Formally, a {\em tree} $t$ is a tuple
$(N_t,\root_t,\lab_t,\child_t)$, where $N_t$ is a finite set of nodes,
$\root_t\in N_t$ is a distinguished root node,
$\lab_t:N_t\rightarrow\Sigma$ is a labeling function, and
$\child_t\subseteq N_t\times N_t$ is the parent-child relation.  We
assume that the relation $\child_t$ is acyclic and require every
non-root node to have exactly one predecessor in this relation.  By
$\Tree$ we denote the set of all finite trees.
We present an example of tree in Figure~\ref{fig:tree}.
\begin{figure}[htb]
    \centering
    \begin{tikzpicture}[yscale=0.85]
      \path[use as bounding box] (-1.25,.25) rectangle (1.25,-3.25);
      \node at (0,0) (n0) {$r$};
      \node at (-0.75,-1) (n1) {$a$};
      \node at (0,-1) (n4) {$b$};
      \node at (0,-2) (n5) {$a$};
      \node at (0.75,-1) (n2) {$c$};
      \node at (0.75,-2) (n3) {$b$};
      \node at (0.75,-3) (n6) {$a$};
      \node at (-0.75,-2) (n7) {$b$};
      \draw[-,semithick] (n1) -- (n7);      
      \draw[-,semithick] (n0) -- (n1);
      \draw[-,semithick] (n0) -- (n1);
      \draw[-,semithick] (n0) -- (n2);
      \draw[-,semithick] (n2) -- (n3);
      \draw[-,semithick] (n3) -- (n6);
      \draw[-,semithick] (n0) -- (n4);
      \draw[-,semithick] (n4) -- (n5);
    \end{tikzpicture}
  \caption{An example of tree.}
  \label{fig:tree}
\end{figure}
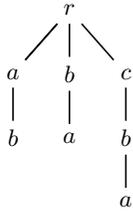

\noindent {\bf Unordered words.} An \emph{unordered word} is
essentially a multiset of symbols i.e., a function
$w:\Sigma\rightarrow\mathbb N_0$ mapping symbols from the alphabet to
natural numbers, and we call $w(a)$ the number of
occurrences of the symbol $a$ in $w$. 
We denote by $\words$ the set containing all the unordered words over the alphabet $\Sigma$.
We also write $a\in w$ as a
shorthand for $w(a)\neq 0$. An empty word $\varepsilon$ is an
unordered word that has $0$ occurrences of every symbol i.e.,
$\varepsilon(a)=0$ for every $a\in\Sigma$. We often use a simple
representation of unordered words, writing each symbol in the alphabet
the number of times it occurs in the unordered word. For example, when
the alphabet is $\Sigma=\{a,b,c\}$, $w_0=aaacc$ stands for the
function $w_0(a) = 3$, $w_0(b) = 0$, and $w_0(c) = 2$. 

The (unordered) concatenation of two unordered words $w_1$ and $w_2$
is defined as the multiset union $w_1\uplus w_2$ i.e., the function
defined as $(w_1\uplus w_2)(a) = w_1(a)+w_2(a)$ for all
$a\in\Sigma$. For instance, $aaacc\uplus{}abbc=aaaabbccc$. Note that
$\varepsilon$ is the identity element of the unordered concatenation
$\varepsilon\uplus w = w\uplus \varepsilon = w$ for all unordered word
$w$. Also, given an unordered word $w$, by $w^i$ we denote the
concatenation $w\uplus\ldots\uplus w$ ($i$ times).

A \emph{language} is a set of unordered words. The unordered
concatenation of two languages $L_1$ and $L_2$ is a language
$L_1\uplus L_2 = \{w_1\uplus w_2\mid w_1\in L_1,\ w_2\in L_2\}$. For
instance, if $L_1 = \{a, aac\}$ and $L_2 = \{ac, b, \varepsilon\}$,
then $L_1\uplus L_2 = \{a,ab,aac,aabc,aaacc\}$.\\

\noindent {\bf Multiplicity schemas.} A \emph{multiplicity} is an element from the set $\{*,+,?,0,1\}$.
\noindent We define the function $\llbracket\cdot\rrbracket$ mapping
multiplicities to sets of natural numbers.  More precisely:
\[
\begin{tabular}{ccccc}
$\llbracket*\rrbracket = \{0,1,2,\ldots\}$, & 
$\llbracket+\rrbracket = \{1,2,\ldots\}$, & 
$\llbracket?\rrbracket = \{0,1\}$, \\
$\llbracket1\rrbracket = \{1\}$, &
$\llbracket0\rrbracket = \{0\}$. 
\end{tabular}
\]
Given a symbol $a\in\Sigma$ and a multiplicity $M$, the language of
$a^M$, denoted $ L(a^M)$, is $\{a^i\mid i\in\llbracket
M\rrbracket\}$.  For example, $ L(a^+) = \{a,aa,\ldots\}$,
$ L(b^0) = \{\varepsilon\}$, and $
L(c^?)=\{\varepsilon,c\}$.

A \emph{disjunctive multiplicity expression} $E$ is:
\[
E \colonequals D_1^{M_1} \shuffle \ldots \shuffle D_n^{M_n},
\]
where for all $1\leq i \leq n$, $M_i$ is a multiplicity and each $D_i$ is:
\[
D_i \colonequals a_1^{M_1'} \mid \ldots \mid a_k^{M_k'},
\]
where for all $1\leq j \leq k$, $M_j'$ is a multiplicity and
$a_j\in\Sigma$.  Moreover, we require that every symbol $a\in \Sigma$
is present at most once in a disjunctive multiplicity expression.  For
instance, $(a\mid b)\shuffle(c\mid d)$ is a disjunctive multiplicity
expression, but $(a\mid b)\shuffle c\shuffle(a\mid d)$ is not because
$a$ appears twice. A \emph{disjunction-free multiplicity expression}
is an expression which uses no disjunction symbol ``$\mid$'' i.e., an
expression of the form $a_1^{M_1} \shuffle \ldots \shuffle a_k^{M_k}$,
where the $a_i$'s are pairwise distinct
symbols in the alphabet and the $M_i$'s are multiplicities (with $1\leq i \leq k$).
We denote by $\DME$ the set of all the disjunctive multiplicity expressions and by $\ME$ the set of all the disjunction-free multiplicity expressions.

The language of a disjunctive multiplicity expression is:
\begin{gather*}
 L(a_1^{M_1} \mid \ldots \mid a_k^{M_k}) =  L(a_1^{M_1}) \cup \ldots\cup L(a_k^{M_k}),\\
 L(D^M) = \{w_1\uplus\ldots\uplus w_i\mid w_1,\ldots,w_i\in L(D)\wedge i\in\llbracket M\rrbracket\},\\
 L(D_1^{M_1} \shuffle \ldots \shuffle D_n^{M_n}) =  L(D_1^{M_1}) \uplus \ldots\uplus  L(D_n^{M_n}).  
\end{gather*}
If an unordered word $w$ belongs to the language of a disjunctive multiplicity expression $E$, we denote it by $w\models E$, and we say that $w$ \emph{satisfies} $E$.
When a symbol $a$ (resp. a disjunctive multiplicity expression $E$)
has multiplicity $1$, we often write $a$ (resp. $E$) instead of $a^1$
(resp. $E^1$).  Moreover, we omit writing symbols and disjunctive
multiplicity expressions with multiplicity $0$. Take, for instance,
$E_0= a^+ \shuffle (b \mid c) \shuffle d^?$ and note that both the
symbols $b$ and $c$ as well as the disjunction $(b \mid c)$ have an
implicit multiplicity $1$. The language of $E_0$ is: 
\[
 L(E_0) = \{a^ib^jc^kd^\ell \mid 
i,j,k,\ell \in \mathbb{N}_0,\ 
i \geq 1,\ 
j+k = 1,\ 
\ell \leq 1
\}.
\]
Next, we recall the unordered schema formalisms from~\cite{BoCiSt13}:
\begin{definition}
  A \emph{disjunctive multiplicity schema (DMS)} is a tuple
  $S=(\root_S, R_S)$, where $\root_S\in\Sigma$ is a designated root
  label and $R_S$ maps symbols in $\Sigma$ to disjunctive multiplicity
  expressions.  By $\dms$ we denote the set of all disjunctive
  multiplicity schemas.  A \emph{disjunction-free multiplicity schema
    (MS)} $S=(\root_S, R_S)$ is a restriction of the $\dms$, where
  $R_S$ maps symbols in $\Sigma$ to disjunction-free multiplicity
  expressions.  By $\ms$ we denote the set of all disjunction-free
  multiplicity schemas.
\end{definition}
To define satisfiability of a DMS $S$ by a tree $t$ we first
define the unordered word $ch_t^n$ of children of a node $n\in N_t$ i.e., 
\[
\ch_t^n(a) = |\{m\in N_t\mid(n,m)\in\child_t\wedge
\lab_t(m)=a\}|.
\]
Now, a tree $t$ \emph{satisfies} $S$, in symbols $t
\models S$, if $\lab_t(\root_t) = \root_S$ and for any node $n\in
N_t$, $\ch_t^n\in L(R_S(\lab_t(n)))$. By $
 L(S)\subseteq \Tree$ we denote the set of all the trees satisfying
$S$. 

In the sequel, we present a schema $S=(\root_S,R_S)$ as a set of
rules of the form $a\rightarrow R_S(a)$, for any $a\in\Sigma$.  If
$ L(R_S(a)) = \varepsilon$, then we write $a\rightarrow
\epsilon$ or we simply omit writing such a rule.
\begin{example}\normalfont We present schemas $S_1,S_2,S_3,S_4$ illustrating the
  formalisms defined above.  They have the root label $r$ and the
  rules:
\begin{align*}
S_1&:~~~r\rightarrow a\shuffle b^*\shuffle c^?&a&\rightarrow b^?&b& \rightarrow a^?&c&\rightarrow b\\[-4pt]
S_2&:~~~r\rightarrow c\shuffle b\shuffle a&a&\rightarrow b^?&b& \rightarrow a&c&\rightarrow b\\[-4pt]
S_3&:~~~r\rightarrow (a\mid b)^+\shuffle c&a&\rightarrow b^?&b& \rightarrow a^?&c&\rightarrow b\\[-4pt]
S_4&:~~~r\rightarrow (a\mid b\mid c)^*&a&\rightarrow \epsilon&b& \rightarrow a^?&c&\rightarrow b
\end{align*}
$S_1$ and $S_2$ are MS, while $S_3$ and $S_4$ are DMS.
The tree from Figure~\ref{fig:tree} satisfies only $S_1$ and $S_3$.
\qed\end{example}
Note that there exist DMS such that the smallest tree in their language has a size exponential in the size of the alphabet, as we observe in the following example.
\begin{example}\normalfont\label{example:exponential}
We consider for $n> 1$ the alphabet $\Sigma=\{r,a_1,b_1,\ldots,a_n,b_n\}$ and the DMS $S_5$ having the root label $r$ and the following rules:
\begin{flalign*}
&r \rightarrow a_1 \shuffle b_1, \\
&a_i\rightarrow a_{i+1} \shuffle b_{i+1}~~(\textrm{for } 1\leq i<n),\\
&b_i\rightarrow a_{i+1} \shuffle b_{i+1}~~(\textrm{for } 1\leq i<n),\\
&a_n\rightarrow \epsilon,\\
&b_n\rightarrow \epsilon.
\end{flalign*}
We present in Figure~\ref{fig:compression} the 
unique tree satisfying this schema and we observe that its size is exponential in the size of the alphabet.
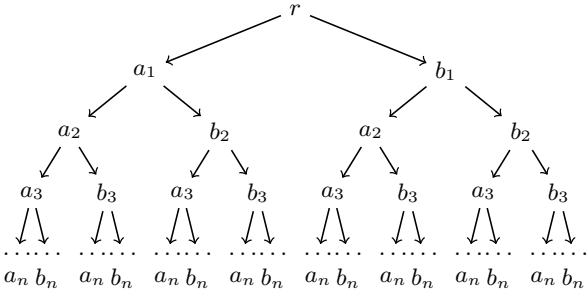
\begin{figure}[h]
	\centering
  \begin{tikzpicture}[yscale=0.65,xscale=2]
    \node at (0,0) (n0) {$r$};
    \node at (-1,-1.25) (n1) {$a_1$};
    \node at (1,-1.25) (n2) {$b_1$};
    \node at (-1.5,-2.5) (n3) {$a_2$};
    \node at (-0.5,-2.5) (n4) {$b_2$};
    \node at (0.5,-2.5) (n30) {$a_2$};
    \node at (1.5,-2.5) (n40) {$b_2$};
    \node at (-1.75,-3.75) (n5) {$a_3$};
    \node at (-1.25,-3.75) (n6) {$b_3$};
    \node at (-0.75,-3.75) (n50) {$a_3$};
    \node at (-0.25,-3.75) (n60) {$b_3$};
    \node at (0.25,-3.75) (n500) {$a_3$};
    \node at (0.75,-3.75) (n600) {$b_3$};
    \node at (1.25,-3.75) (n5000) {$a_3$};
    \node at (1.75,-3.75) (n6000) {$b_3$};
    \node at (-1.85,-5) (n7) {$\ldots$};
    \node at (-1.65,-5) (n71) {$\ldots$};
    \node at (-1.35,-5) (n8) {$\ldots$};
    \node at (-1.15,-5) (n81) {$\ldots$};
    \node at (-0.85,-5) (n70) {$\ldots$};
    \node at (-0.65,-5) (n701) {$\ldots$};
    \node at (-0.35,-5) (n80) {$\ldots$};
    \node at (-0.15,-5) (n801) {$\ldots$};
    \node at (0.15,-5) (n700) {$\ldots$};
    \node at (0.35,-5) (n7001) {$\ldots$};
    \node at (0.65,-5) (n800) {$\ldots$};
    \node at (0.85,-5) (n8001) {$\ldots$};
    \node at (1.15,-5) (n7000) {$\ldots$};
    \node at (1.35,-5) (n70001) {$\ldots$};
    \node at (1.65,-5) (n8000) {$\ldots$};
    \node at (1.85,-5) (n80001) {$\ldots$};

    \node at (-1.85,-5.5) (n7x) {$a_n$};
    \node at (-1.65,-5.5) (n71x) {$b_n$};
    \node at (-1.35,-5.5) (n8x) {$a_n$};
    \node at (-1.15,-5.5) (n81x) {$b_n$};
    \node at (-0.85,-5.5) (n70x) {$a_n$};
    \node at (-0.65,-5.5) (n701x) {$b_n$};
    \node at (-0.35,-5.5) (n80x) {$a_n$};
    \node at (-0.15,-5.5) (n801x) {$b_n$};
    \node at (0.15,-5.5) (n700x) {$a_n$};
    \node at (0.35,-5.5) (n7001x) {$b_n$};
    \node at (0.65,-5.5) (n800x) {$a_n$};
    \node at (0.85,-5.5) (n8001x) {$b_n$};
    \node at (1.15,-5.5) (n7000x) {$a_n$};
    \node at (1.35,-5.5) (n70001x) {$b_n$};
    \node at (1.65,-5.5) (n8000x) {$a_n$};
    \node at (1.85,-5.5) (n80001x) {$b_n$};

    \draw[->,semithick] (n0) -- (n1);
    \draw[->,semithick] (n0) -- (n2);
    \draw[->,semithick] (n1) -- (n3);
    \draw[->,semithick] (n2) -- (n30);
    \draw[->,semithick] (n1) -- (n4);
    \draw[->,semithick] (n2) -- (n40);
    \draw[->,semithick] (n3) -- (n5);
    \draw[->,semithick] (n3) -- (n6);
    \draw[->,semithick] (n4) -- (n50);
    \draw[->,semithick] (n4) -- (n60);
    \draw[->,semithick] (n30) -- (n500);
    \draw[->,semithick] (n30) -- (n600);
    \draw[->,semithick] (n40) -- (n5000);
    \draw[->,semithick] (n40) -- (n6000);
    \draw[->,semithick] (n5) -- (n7);
    \draw[->,semithick] (n5) -- (n71);
    \draw[->,semithick] (n6) -- (n8);
    \draw[->,semithick] (n50) -- (n70);
    \draw[->,semithick] (n60) -- (n80);
    \draw[->,semithick] (n500) -- (n700);
    \draw[->,semithick] (n600) -- (n800);
    \draw[->,semithick] (n5000) -- (n7000);
    \draw[->,semithick] (n6000) -- (n8000);
    \draw[->,semithick] (n6) -- (n81);
    \draw[->,semithick] (n50) -- (n701);
    \draw[->,semithick] (n60) -- (n801);
    \draw[->,semithick] (n500) -- (n7001);
    \draw[->,semithick] (n600) -- (n8001);
    \draw[->,semithick] (n5000) -- (n70001);
    \draw[->,semithick] (n6000) -- (n80001);
  \end{tikzpicture}
\caption{\label{fig:compression}The unique tree satisfying the schema $S_5$.}
\end{figure}
\qed\end{example}

\noindent{\bf Alternative definition with characterizing triples.}
Any disjunctive multiplicity expression $E$ can be expressed alternatively by its \emph{\emph{(}characterizing\emph{)} triple} $(C_E,N_E,P_E)$ consisting of the following sets:
\begin{itemize}
\item The \emph{conflicting pairs of siblings} $C_E$ contains
  pairs of symbols in $\Sigma$ such that $E$ defines no word using
  both symbols simultaneously:
  \[
  C_E = \{(a_1,a_2)\in\Sigma\times\Sigma\mid \varnot \exists
  w\in L(E).\ a_1\in w\wedge a_2\in w\}.
  \]
\item The \emph{extended cardinality map} $N_E$ captures for each symbol
 in the alphabet the possible numbers of its occurrences in
  the unordered words defined by $E$:
  \[
  N_{E} = \{(a, w(a))\in \Sigma\times\mathbb N_0\mid w\in  L(E)\}.
  \]
\item The \emph{sets of required symbols} $P_E$ which captures symbols
  that must be present in every word; essentially, a set of symbols $X$
  belongs to $P_E$ if every word defined by $E$ contains at least one
  element from $X$:
  \[
  P_E=\{X\subseteq\Sigma\mid\forall w\in L(E).\ \exists a\in
  X.\ a \in w\}.
  \]
\end{itemize}
As an example we take $E_0= a^+ \shuffle (b \mid c) \shuffle d^?$.
Because $P_E$ is closed under supersets, we list only its minimal
elements:
\begin{gather*}
  C_{E_0} = \{ (b,c), (c,b) \}, \qquad 
  P_{E_0} = \{ \{a\}, \{b,c\}, \ldots\},\\
  N_{E_0} = \{ (b,0), (b,1), (c,0), (c,1), (d,0), (d,1), (a,1), (a,2), 
  \ldots\}.
\end{gather*}
Two equivalent disjunctive multiplicity expressions yield the same triples and hence $(C_E,N_E,P_E)$ can be viewed as the \emph{normal form} of a given expression $E$~\cite{BoCiSt13}.
Moreover, each set has a compact representation of size polynomial in the size of the alphabet and computable in PTIME. 
We illustrate them on the same $E_0= a^+ \shuffle (b \mid c) \shuffle d^?$:
\begin{itemize}
\item $C_E^*$ consists of sets of symbols present in $E$ such that any pairwise two of them are conflicting:
\[
C_{E_0}^* = \{\{b,c\}\}.
\]
\item $N_E^*$ is a function mapping symbols to multiplicities such that for any unordered word $w\in L(E)$, and for any symbol $a\in\Sigma$, $w(a)\in \llbracket N_E^*(a)\rrbracket$:
\[
N_{E_0}^*(a) = +,~~~N_{E_0}^*(b) = N_{E_0}^*(c) = N_{E_0}^*(d) = {?}.
\]
\item $P_E^*$ contains only the $\subseteq$-minimal elements of $P_E$:
\[
P_{E_0}^* = \{\{a\}, \{b,c\}\}.
\]
\end{itemize}
Also note that we can easily construct a disjunctive multiplicity expression from its characterizing triple.
A simple algorithm has to loop over the sets from $C_E^*$ and $P_E^*$ to compute for each label with which other labels it is linked by the disjunction operator.
Then, using $N_E^*$, the algorithm associates to each label and each disjunction the correct multiplicity.
For example, take the following compact triples:
\begin{gather*}
C_{E_1}^* = \{\{a,e\},\{c,d\}\}, \qquad P_{E_1}^* = \{\{a,e\},\{b\}\},\\
N_{E_1}^*(a) = *,~~ N_{E_1}^*(b) = 1,~~N_{E_1}^*(c)=N_{E_1}^*(d)=N_{E_1}^*(e) = {?}.
\end{gather*}
Note that they characterize the expression:
\[
{E_1}=(a^+\mid e)\shuffle b\shuffle (c^?\mid d^?).
\]
We have introduced the alternative definition with characterizing triples because we later propose an algorithm which learns characterizing triples from unordered word examples (Algorithm~\ref{alg1} from Section~\ref{sec:dms:pos}).
Then, from this information, the corresponding disjunctive multiplicity expression can be constructed in a straightforward manner.

\section{Learning framework}\label{sec:framework}
We use a variant of the standard language inference framework~\cite{Gold67, Higuera97} adapted to learning disjunctive multiplicity expressions and schemas.
A \emph{learning setting} is a tuple containing the set of \emph{concepts} that are to be learned, the set of \emph{instances} of the concepts that are to serve as examples in learning, and the \emph{semantics} mapping every concept to its set of instances.
\begin{definition}
A \emph{learning setting} is a tuple $(\mathcal{E,C,L})$, where $\mathcal E$ is a set of examples, $\mathcal C$ is a class of concepts, and $\mathcal L$ is a function that maps every concept in $\mathcal C$ to the set of all its examples (a subset of $\mathcal E$).
\end{definition}
For example, the setting for learning disjunctive multiplicity expressions from positive examples is the tuple $(\words,\DME, L)$ and the setting for learning disjunctive multiplicity schemas from positive examples is $(\Tree,\dms, L)$.
We obtain analogously the learning settings for disjunction-free multiplicity expressions and schemas: $(\words,\ME,L)$ and $(\Tree, \ms,L)$, respectively.
The general formulation of the definition allows us to easily define settings for learning from both positive and negative examples, which we present in Section~\ref{sec:dms:pos:neg}.

To define a learnable concept, we fix a learning setting $\mathcal K = (\mathcal E,\mathcal C,\mathcal L)$ and we introduce some auxiliary notions.
A \emph{sample} is a finite nonempty subset $D$ of $\mathcal E$ i.e., a set of examples.
A sample $D$ is \emph{consistent} with a concept $c\in\mathcal C$ if $D\subseteq\mathcal L(c)$.
A \emph{learning algorithm} is an algorithm that takes a sample and returns a concept in $\mathcal C$ or a special value \emph{null}.
\begin{definition}\label{def:learnable}
A class of concepts $\mathcal C$ is \emph{learnable in polynomial time and data} in the setting $\mathcal K = (\mathcal{E,C,L})$ if there exists a polynomial learning algorithm $\learner$ satisfying the following two conditions:
\begin{enumerate}
\item {\bf Soundness.} For any sample $D$, the algorithm $\learner(D)$ returns a concept consistent with $D$ or a special \emph{null} value if no such concept exists.
\item {\bf Completeness.} For any concept $c\in\mathcal C$ there exists a sample $\CS_c$ such that for every sample $D$ that extends $\CS_c$ consistently with $c$ i.e., $\CS_c\subseteq D\subseteq \mathcal L(c)$, the algorithm $\learner(D)$ returns a concept equivalent to $c$.
Furthermore, the cardinality of $\CS_c$ is polynomially bounded by the size of the concept. 
\end{enumerate}
\end{definition}
The sample $\cs_c$ is called the \emph{characteristic sample} for $c$ w.r.t.\ $\learner$ and $\mathcal K$.
For a learning algorithm there may exist many such samples.
The definition requires that one characteristic sample exists.
The soundness condition is a natural requirement, but alone it is not sufficient to eliminate trivial learning algorithms.
For instance, if we want to learn disjunctive multiplicity expressions from positive examples over the alphabet $\{a_1,\ldots,a_n\}$, an algorithm always returning $a_1^*\shuffle\ldots\shuffle a_n^*$ is sound.
Consequently, we require the algorithm to be complete analogously to how it is done for grammatical language inference~\cite{Gold67, Higuera97}.

Typically, in the case of polynomial grammatical inference, the \emph{size} of the characteristic sample is required to be polynomial in the size of the concept to be learned~\cite{Higuera97}, where the size of a sample is the sum of the sizes of the examples that it contains.
From the definition of the DMS, since repetitions of symbols are discarded among the disjunctions, the size of a schema is polynomial in the size of the alphabet.
Thus, a natural requirement would be that the size of the characteristic sample is polynomially bounded by the size of the alphabet.
There exist DMS such that the smallest tree in their language is exponential in the size of the alphabet (cf. Example~\ref{example:exponential}).
Because of space restrictions, we have imposed in the definition of learnability that the \emph{cardinality} (and not the size) of the characteristic sample is polynomially bounded by the size of the concept, hence by the size of the alphabet.
However, we are able to obtain characteristic samples of size polynomial in the size of the alphabet by using a \emph{compressed} representation of the XML trees, for example with \emph{directed acyclic graphs}~\cite{LoMaNo13}.
We will provide in the full version of the paper the details about this compression technique and the new definition of the learnability.
The algorithms that we propose in this paper transfer without any alteration for the definition using compressed trees.

Additionally to the conditions imposed by the definition of learnability, we are interested in the existence of learning algorithms which return \emph{minimal} concepts for a given set of examples.
It is important to emphasize that we mean minimality in terms on language inclusion.
When only positive examples are allowed, a DMS $S$ is a \emph{minimal} DMS consistent with a set of trees $D$ iff $D\subseteq L(S)$, and, for any $S'\neq S$, if $D\subseteq L(S')$, then $L(S')\varnot\subset L(S)$.
We similarly obtain the definition of minimality for learning disjunctive multiplicity expressions.
Intuitively, a minimal schema consistent with a set of examples is the most specific schema consistent with them.
For example, recall the three XML documents storing information about books from Figure~\ref{fig:dblp}.
Assume that the user provides the three documents as positive examples to a learning algorithm.
The most specific schema consistent with the examples is:
\[
{\sl book} \rightarrow  {\sl title} \shuffle {\sl year}^? \shuffle ({\sl author}^+ \mid {\sl editor}^+).\]
Another possible solution is the schema:
\[
{\sl book} \rightarrow  {\sl title} \shuffle {\sl year}^? \shuffle {\sl author}^* \shuffle {\sl editor}^*.
\]
It is less likely that a user wants to obtain such a schema which allows a {\sl book} to have at the same time {\sl author}'s and {\sl editor}'s.
In this case, the most specific schema also corresponds to the natural requirements that one might want to impose on a XML collection storing information about books, in particular a {\sl book} has either at least one {\sl author} or at least one {\sl editor}.
Minimality is often perceived as a better fitted learning solution~\cite{Angluin80,Angluin82,BGNV10,GaVi90}, and this motivates our requirement for the learning algorithms to return minimal concepts consistent with the examples.

\section{Learning DMS from positive examples}\label{sec:dms:pos}
The main result of this section is the learnability of the disjunctive multiplicity schemas from positive examples i.e., in the setting $(\Tree,\dms, L)$.
We present a learning algorithm that constructs a \emph{minimal} schema consistent with the input set of trees.

First, we study the problem of learning a disjunctive multiplicity expression from positive examples i.e., in the setting $(W_\Sigma, \DME,L)$.
We present a learning algorithm that constructs a minimal disjunctive multiplicity expression consistent with the input collection of unordered words.
Given a set of unordered words, there may exist many consistent minimal disjunctive multiplicity expressions.
In fact, for some sets of positive examples there may be an exponential number of such expressions (cf. the proof of Lemma~\ref{lemma:cons}).
Take in Example~\ref{ex:minimal} a sample and two consistent minimal disjunctive multiplicity expressions.

\begin{example}\label{ex:minimal}\normalfont
Consider the alphabet $\Sigma=\{a,b,c,d,e\}$ and the set of unordered words $D=\{aabc, abd, be\}$.
Take the following two disjunctive multiplicity expressions:
\begin{gather*}
E_1 = (a^+\mid e)\shuffle b \shuffle (c^?\mid d^?),\\
E_2 = a^*\shuffle b  \shuffle (c\mid d\mid e).
\end{gather*}
Note that $D\subseteq  L(E_1)$ and $D\subseteq  L(E_2)$.
Also note that $ L(E_1)\varnot\subseteq  L(E_2)$ (because of $bce$) and $ L(E_2)\varnot\subseteq  L(E_1)$ (because of $abe$).
On the other hand, we easily observe that both $E_1$ and $E_2$ are minimal disjunctive multiplicity expressions with languages including $D$.\qed
\end{example}
Before we present the learning algorithms, we have to introduce additional notions.
First, we define the function $\minfit(\cdot)$ which, given a set of unordered words $D$ and a label $a\in\Sigma$, computes the multiplicity $M$ such that $\forall w\in D.\ w(a)\in\llbracket M\rrbracket$ and there does not exist another multiplicity $M'$ such that $\llbracket M'\rrbracket\subset\llbracket M\rrbracket$ and $\forall w\in D.\ w(a)\in \llbracket M'\rrbracket$.
For example, given the set of unordered words $D=\{aabc, abd, be\}$, we have:
\begin{gather*}
\minfit(D,a) = *, \\
\minfit(D,b) = 1,\\
\minfit(D,c) = {?}.
\end{gather*}
Next, we introduce the notion of \emph{maximal-clique partition of a graph}.
Given a graph $G=(V,E)$, a maximal-clique partition of $G$ is a graph partition $(V_1,\ldots,V_k)$ such that:
\begin{itemize} 
\item The subgraph induced in $G$ by any $V_i$ is a clique (with $1\leq i\leq k$),
\item The subgraph induced in $G$ by the union of any $V_i$ and $V_j$ is not a clique (with $1\leq i\neq j\leq k$).
\end{itemize}
In Figure~\ref{fig:cuts} we present a graph and a maximal-clique partition of it i.e., $\{\{a,e\}, \{b\}, \{c,d\}\}$.
Note that the graph from Figure~\ref{fig:cuts} allows one other maximal-clique partition i.e., $\{\{a\}, \{b\}, \{c,d,e\}\}$.
On the other hand, $\{\{a\}, \{b\}, \{c,d\},\{e\}\}$ is not a maximal-clique partition because it contains two sets such that their union induces a clique i.e., $\{a\}$ and $\{e\}$.
\begin{figure}[htb]
\centering
	\begin{tikzpicture}[xscale=0.85, yscale=0.85]
		\node at (-0.75, 0.75) (a) {$a$};
		\node at (0.75, 0.75) (e) {$e$};
		\node at (-0.75,-0.75) (c) {$c$};
		\node at (0.75,-0.75) (d) {$d$};
		\node at (-2,0) (b) {$b$};
    \draw (a) edge[-] (e);
    \draw (e) edge[-] (c);
    \draw (e) edge[-] (d);
    \draw (c) edge[-] (d);
\draw [rounded corners, densely dotted, red, very thick] (-1.1, 1.1) rectangle (1.1, 0.4);
\draw [rounded corners, densely dotted, red, very thick] (1.1, -1.1) rectangle (-1.1, -0.4);
\draw [rounded corners, densely dotted, red, very thick] (-2.35,-0.35) rectangle (-1.65, 0.35);
	\end{tikzpicture}
	\caption{A graph and a maximal-clique partition of it. Vertices from the same rectangle belong to the same set.\label{fig:cuts}}
\end{figure}
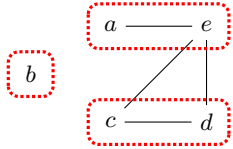

\noindent Unlike the \emph{clique} problem, which is known to be NP-complete~\cite{Papadimitriou94}, we can partition in PTIME a graph in maximal cliques with a greedy algorithm.
In the sequel, we assume that the vertices of the graph are labels from $\Sigma$.
For a given graph there may exist many maximal-clique partitions and we use the total order $<_\Sigma$ to propose a deterministic algorithm constructing a maximal-clique partition.
The algorithm works as follows: we take the smallest label from $\Sigma$ w.r.t.\ $<_\Sigma$ and not yet used in a clique, and we iteratively extend it to a maximal clique by adding connected labels.
Every time when we have a choice to add a new label to the current clique, we take the smallest label w.r.t.\ $<_\Sigma$.
We repeat this until all the labels are used.
This algorithm yields to a unique maximal-clique partition.
For example, for the graph from Figure~\ref{fig:cuts}, we compute the maximal-clique partition marked on the figure i.e., $\{\{a,e\}, \{b\}, \{c,d\}\}$.
We additionally define the function $\maxcl(\cdot)$ which takes as input a graph, computes a maximal-clique partition using the greedy algorithm described above and, at the end, for technical reasons, the algorithm discards the singletons.
For example, for the graph from Figure~\ref{fig:cuts}, the function $\maxcl(\cdot)$ returns $\{\{a,e\},\{c,d\}\}$.
Clearly, the function $\maxcl(\cdot)$ works in PTIME.

Next, we present Algorithm~\ref{alg1} and we claim that, given a set of unordered words $D$, it computes in polynomial time a disjunctive multiplicity expression $E$ consistent with $D$.
\begin{algorithm}
\caption{\label{alg1}Learning disjunctive multiplicity expressions from positive examples.}
\ALGORITHM $\learner_\DME^+(D)$\\
\INPUT: A set of unordered words $D = \{w_1,\ldots,w_n\}$\\
\OUTPUT: A minimal disjunctive multiplicity expression $E$ consistent with $D$\\
\LN \FOR $a\in\Sigma$ \DO\\
\LN \TAB \LET $N_E^*(a) = \minfit(D,a)$\\
\LN \LET $\Sigma' = \{a\in \Sigma\mid N_E^*(a)\in\{?,1,*,+\}\}$\\
\LN \LET $G = (\Sigma',~ \{(a,b)\in\Sigma'\times\Sigma'\mid \forall w\in D.\ a\notin w\vee b\notin w\})$\\
\LN \LET $C_E^* = \maxcl(G)$\\
\LN \LET $P_E^* = \{\{a\}\mid N_E^*(a) \in \{1, +\}\}\\
\TAB\TAB\TAB\cup\{X\in C_E^*\mid\forall w\in D.\ \exists a\in X.\ a\in w\} $\\
\LN \RETURN $E$ characterized by the triple $(C_E^*, N_E^*, P_E^*)$
\end{algorithm}

\noindent Algorithm~\ref{alg1} works in three steps and we illustrate each of them on the sample $D= \{aabc,abd,be\}$ from Example~\ref{ex:minimal}.
The first step (lines 1-2) computes the compact representation of the extended cardinality map for each symbol from $\Sigma$, using the function $\minfit(\cdot)$.
We ignore in the sequel the symbols never occurring in words from $D$ (line 3).
For the sample from Example~\ref{ex:minimal}, we infer:
\begin{gather*}
N_E^*(a) = *, \qquad N_E^*(b) = 1,\\N_E^*(c)=N_E^*(d)=N_E^*(e) = {?}.
\end{gather*}
The second step of the algorithm (lines 4-5) computes the compact sets of conflicting siblings.
First, we construct the graph $G$ having as set of vertices the labels occurring at least once in unordered words from $D$.
Two labels are linked by an edge in $G$ if there does not exist an unordered word in $D$ where both of them are present at the same time, in other words the two labels are a candidate pair of conflicting siblings.
Next, we apply the function $\maxcl(\cdot)$ on the graph $G$.
For the unordered words from Example~\ref{ex:minimal} we obtain the graph from Figure~\ref{fig:cuts}, and we infer $C_E^*= \{\{a,e\}, \{c,d\}\}$.
Note that the maximal-clique partition implies the minimality of the disjunctive multiplicity expression constructed later using the inferred $C_E^*$.

The third step of the algorithm (line 6) computes the $\subseteq$-minimal sets of required symbols $P_E^*$.
Each symbol having associated a multiplicity $1$ or $+$ belongs to a required set of symbols containing only itself because it is present in all the unordered words from $D$ and we want to learn a minimal concept.
Moreover, we add in $P_E^*$ the sets of conflicting siblings inferred at the previous step with the property that one of them is present in any unordered word from $D$, to guarantee the minimality of the inferred language.
For the sample from Example~\ref{ex:minimal},
$\{b\}$ belongs to $P_E^*$.
Since from the previous step we have $C_E^* = \{\{a,e\}, \{c,d\}\}$, at this step we have to add $\{a,e\}$ to $P_E^*$ because all the words in the sample contain either $a$ or $e$.
On the other hand, we do not add $\{c,d\}$ because the sample contains the word $be$.
The inferred $P_E^*$ is $\{\{a,e\}, \{b\}\}$.

Finally, the algorithm returns the disjunctive multiplicity expression characterized by the inferred triple (line 7).
For the sample $D$, it returns $E=(a^+\mid e)\shuffle b \shuffle (c^?\mid d^?)$.
Note that if at step 2 we take a partition which is not a maximal-clique one, for example $\{\{a\}, \{b\}, \{c,d\},\{e\}\}$, and we later construct a disjunctive multiplicity expression using it, we get $a^*\shuffle b\shuffle (c^?\mid d^?)\shuffle e^?$, which includes both $E_1$ and $E_2$ from Example~\ref{ex:minimal}, therefore is not minimal.
Also note that at step 3, without $\{a,e\}$ added to $P_E^*$, the resulting schema would accept an unordered word without any $a$ and $e$, so the learned language would not be minimal.

Algorithm~\ref{alg1} is sound and each of its three steps requires polynomial time.
Next, we prove the completeness of the algorithm.
Given a disjunctive multiplicity expression $E$, we construct in three steps its characteristic sample $\cs_E$.
At the same time, we illustrate the construction on  the disjunctive multiplicity expression $E_1 = (a^+\mid e)\shuffle b \shuffle (c^?\mid d^?)$:
\begin{enumerate}
\item We take the pairs of symbols which can be found together in an unordered word in $L(E)$.
For each of them, we add in $\cs_E$ an unordered word containing only the two symbols.
Next, for each symbol occurring in the disjunctions from $E$, we add in $\cs_E$ an unordered word containing only one occurrence of that symbol.
We also add in $\cs_E$ the empty word. 
For $E_1$ we obtain:
$\{ab,ac,ad, bc,bd, be, ce,de, a,b,c,d,e,\varepsilon\}$.
\item We replace each unordered word $w$ obtained at the previous step with $w\uplus w'$, where $w'$ is a minimal unordered word such that $w\uplus w'\in L(E)$.
The newly obtained $\cs_E$ contains unordered words from $L(E)$.
For $E_1$ we obtain:
$\{ab,abc,abd,be,bce,bde\}$.
\item For each symbol $a$ from the alphabet such that $N_E^*(a)$ is $*$ or $+$, we randomly take an unordered word $w$ from $\cs_E$ and containing $a$ and we add to $\cs_E$ the unordered word $w\uplus a$.
In the worst case, at this step the number of words in the characteristic sample is doubled, but it remains polynomial in the size of the alphabet.
For $E_1$ we obtain:
$\{ab,aab,abc,abd,be,bce,bde\}$.
\end{enumerate}
Note that there may exist many equivalent characteristic samples.
The first step of the construction implies that the only potential conflicts to be considered in Algorithm~\ref{alg1} are the conflicts implied by the expression.
In other words, all the connected components of the graph of potential conflicts from Algorithm~\ref{alg1} are cliques.
Thus, there is only one possible maximal-clique partition to be done in the algorithm.
Moreover, the second and third steps of the construction ensure that, for any sample consistently extending the characteristic sample, Algorithm~\ref{alg1} infers the correct sets of required symbols and the extended cardinality map, respectively.

We have proposed Algorithm~\ref{alg1}, which is a sound and complete algorithm for learning minimal disjunctive multiplicity expressions from unordered words positive examples.
Thus, we can state the following result:
\begin{lemma}
The concept class $\DME$ is learnable in polynomial time and data from positive examples i.e., in the setting $(\words,\DME,L)$.
\end{lemma}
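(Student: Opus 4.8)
The plan is to verify that Algorithm~\ref{alg1}, i.e.\ $\learner_\DME^+$, meets the three requirements implicit in Definition~\ref{def:learnable}: polynomial running time, soundness, and completeness. The running time is immediate, since each of the three phases (the $\minfit$ computation, the graph construction together with $\maxcl$, and the computation of $P_E^*$) runs in time polynomial in $|D|$ and $|\Sigma|$. For soundness I would argue directly that $D\subseteq L(E)$ for the returned $E$: the multiplicities $N_E^*(a)=\minfit(D,a)$ are by definition satisfied by every $w\in D$; every clique in $C_E^*=\maxcl(G)$ consists of labels that pairwise never co-occur in $D$, so no $w\in D$ violates a conflict constraint; and every set placed in $P_E^*$ is, by its defining condition, met by every $w\in D$. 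Since a consistent expression always exists (e.g.\ $a_1^*\shuffle\ldots\shuffle a_n^*$), the algorithm never needs to return \emph{null}.

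The substance of the proof is completeness, for which I would exhibit, for each $E\in\DME$, the characteristic sample $\cs_E$ built by the three-step construction described above, and show two things: that $|\cs_E|$ is polynomial in $|\Sigma|$, and that $\learner_\DME^+(D)$ is equivalent to $E$ for every $D$ with $\cs_E\subseteq D\subseteq L(E)$. The cardinality bound is routine: step~1 contributes $O(|\Sigma|^2)$ words (one per compatible pair, plus singletons and $\varepsilon$), step~2 preserves cardinality, and step~3 at most doubles it. For the equivalence, I would recall that equivalent expressions share the same characterizing triple, so it suffices to prove that on any such $D$ the algorithm recovers the triple $(C_E^*,N_E^*,P_E^*)$ of $E$.

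Recovering each component reduces to a monotonicity argument exploiting $\cs_E\subseteq D\subseteq L(E)$. For $N_E^*$: steps~1--3 guarantee that $D$ contains, for each label $a$ present in $E$, words witnessing the extremal $a$-counts allowed by $N_E^*(a)$ — in particular step~3 supplies a word with two occurrences whenever the multiplicity is $*$ or $+$, separating these from $?$ and $1$ — so $\minfit(D,a)$ returns exactly $N_E^*(a)$, while no $w\in L(E)$ can push the count outside $\llbracket N_E^*(a)\rrbracket$. For $C_E^*$: the key observation is that in a disjunctive multiplicity expression the genuine conflict relation is a disjoint union of cliques (the branches of each disjunction of multiplicity $1$ or $?$ pairwise conflict, and there are no other conflicts). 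Step~1 inserts, for every pair of labels that \emph{can} co-occur in $L(E)$, a word in which both appear; hence the graph $G$ built by the algorithm has an edge exactly between genuinely conflicting labels, so $G$ is itself a disjoint union of cliques. On such a graph the maximal-clique partition is unique, and $\maxcl(G)$ deterministically returns $C_E^*$ regardless of how $D$ extends $\cs_E$. Finally, for $P_E^*$: the singleton sets are exactly those $\{a\}$ with $N_E^*(a)\in\{1,+\}$, already settled by the $N_E^*$ analysis, and for each $X\in C_E^*$ the algorithm's test $\forall w\in D.\ \exists a\in X.\ a\in w$ passes precisely when $X$ is required in $E$ — it holds for all of $L(E)\supseteq D$ when $X$ is required, and the minimal word obtained by filling in $\varepsilon$ in step~2 (which takes the nullable branch of every non-required disjunction) lies in $\cs_E$ and witnesses failure otherwise.

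I expect the $C_E^*$ step to be the main obstacle: establishing that the potential-conflict graph collapses to a disjoint union of cliques, so that the greedy $\maxcl$ is \emph{forced} to the correct partition. This rests entirely on the clique structure of conflicts in a $\DME$ and on step~1 of the characteristic sample eliminating every spurious edge; once this is in place, the deterministic tie-breaking of $\maxcl$ makes the recovered $C_E^*$ independent of the particular consistent extension $D$, and the remaining components follow from the monotonicity arguments above.
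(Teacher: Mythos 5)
Your proposal is correct and follows essentially the same route as the paper: it verifies soundness and polynomial running time of Algorithm~\ref{alg1} directly, and establishes completeness via the same three-step characteristic sample, with the same key observation that step~1 forces the potential-conflict graph to be a disjoint union of cliques so that $\maxcl$ is forced to the correct partition $C_E^*$, while steps~2 and~3 pin down $P_E^*$ and $N_E^*$. The only difference is that you spell out the monotonicity arguments and the clique structure of the genuine conflict relation in somewhat more detail than the paper does.
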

Next, we extend the result for DMS.
We propose Algorithm~\ref{alg2}, which learns a disjunctive multiplicity schema from a set of trees.
We assume w.l.o.g. that all the trees from the sample have as root label the same label $r$.
If this assumption is not satisfied, the sample is not consistent.
The algorithm infers, for each label $a$ from the alphabet, the minimal disjunctive multiplicity expression consistent with the children of all the nodes labeled $a$ from the trees from the sample.
\begin{algorithm}
\caption{\label{alg2}Learning DMS from positive examples.}
\ALGORITHM: $\learner_\dms^+(D)$\\
\INPUT: A set of trees $D = \{t_1,\ldots,t_n\}$ s.t.\ $\lab_{t_i}(\root_{t_i}) = r$ (with $1\leq i\leq n)$\\
\OUTPUT: A minimal DMS $S$ consistent with $D$\\
\LN \FOR $a\in\Sigma$ \DO\\
\LN \TAB \LET $D'=\{\ch_t^n\mid t\in D.\ n\in N_t.\ \lab_t(n) = a\}$\\
\LN \TAB \LET $R_S(a) = \learner_\DME^+(D')$\\
\LN \RETURN $S=(r,R_S)$
\end{algorithm}

\noindent Algorithm~\ref{alg2} returns a minimal disjunctive multiplicity schema consistent with the sample because the inferred rule for each label represents a minimal disjunctive multiplicity expression obtained using Algorithm~\ref{alg1}.
Next, we show that Algorithm~\ref{alg2} is also complete by  providing a construction of a characteristic sample of cardinality polynomial in the size of the alphabet.
For this purpose, we have to define first two additional notions.
Given a DMS $S=(\root_S,R_S)$ and a label $a\in\Sigma$, we define the following two trees:
\begin{itemize}
\item $\min_{t\uparrow(S,a)}$ is a minimal tree satisfying $S$ and containing a node labeled $a$,
\item $\min_{t\downarrow(S,a)}$ is a minimal tree satisfying $S'=(a,R_S)$. 
It is equivalent to $\min_{t\uparrow(S',a)}$.
\end{itemize}
We illustrate the two notions defined above in the following example:
\begin{example}\label{ex:min:tree}\normalfont
Consider the DMS $S$ having the root label $r$ and the rules:
\begin{gather*}
r\rightarrow a^*\shuffle (b\mid c)\qquad a\rightarrow d^?\\
b,c\rightarrow e^+ \qquad~~~~~~~ d,e\rightarrow\epsilon
\end{gather*}
We present in Figure~\ref{fig:min:trees} some trees and we explain for each of them how it can be used.
\begin{figure}[htb]
\centering
\subfigure[\newline$\min_{t\downarrow(S,r)}$\newline$\min_{t\uparrow(S,r)}$\newline$\min_{t\uparrow(S,b)}$\newline$\min_{t\uparrow(S,e)}$]{
	\begin{tikzpicture}[yscale=0.85]
		\node at (0,0) (r) {$r$};
		\node at (0,-1) (b) {$b$};
		\node at (0,-2) (e) {$e$};
		\node at (1,0) (x) {};
		\draw (r) edge[-] (b);
		\draw (b) edge[-] (e);
	\end{tikzpicture}
}
\subfigure[\newline$\min_{t\downarrow(S,r)}$\newline$\min_{t\uparrow(S,r)}$\newline$\min_{t\uparrow(S,c)}$\newline$\min_{t\uparrow(S,e)}$]{
	\begin{tikzpicture}[yscale=0.85]
		\node at (0,0) (r) {$r$};
		\node at (0,-1) (b) {$c$};
		\node at (0,-2) (e) {$e$};
		\node at (1,0) (x) {};
		\draw (r) edge[-] (b);
		\draw (b) edge[-] (e);
	\end{tikzpicture}
}
\subfigure[$\min_{t\uparrow(S,a)}$]{
	\begin{tikzpicture}[yscale=0.85,xscale=0.5]
		\node at (0,0) (r) {$r$};
		\node at (-0.5,-1) (a) {$a$};
		\node at (0.5,-1) (b) {$b$};
		\node at (0.5,-2) (e) {$e$};
		\node at (1.5,0) (x) {};
		\draw (r) edge[-] (b);
		\draw (r) edge[-] (a);
		\draw (b) edge[-] (e);
	\end{tikzpicture}
}
\subfigure[$\min_{t\uparrow(S,d)}$]{
	\begin{tikzpicture}[yscale=0.85,xscale=0.5]
		\node at (0,0) (r) {$r$};
		\node at (-0.5,-1) (a) {$a$};
		\node at (-0.5,-2) (d) {$d$};
		\node at (0.5,-1) (b) {$b$};
		\node at (0.5,-2) (e) {$e$};
		\node at (1.5,0) (x) {};
		\draw (r) edge[-] (b);
		\draw (r) edge[-] (a);
		\draw (b) edge[-] (e);
		\draw (a) edge[-] (d);
	\end{tikzpicture}
}
\subfigure[ $\min_{t\downarrow(S,a)}$]{
	\begin{tikzpicture}[yscale=0.85]
		\node at (0,0) (r) {};
		\node at (0,-1) (b) {$a$};
		\node at (0,-2) (e) {$d$};
		\node at (1,0) (x) {};
		\draw (b) edge[-] (e);
	\end{tikzpicture}
}
\subfigure[ $\min_{t\downarrow(S,b)}$]{
	\begin{tikzpicture}[yscale=0.85]
		\node at (0,0) (r) {};
		\node at (0,-1) (b) {$b$};
		\node at (0,-2) (e) {$e$};
		\node at (1,0) (x) {};
		\draw (b) edge[-] (e);
	\end{tikzpicture}
}
\subfigure[ $\min_{t\downarrow(S,c)}$]{
	\begin{tikzpicture}[yscale=0.85]
		\node at (0,0) (r) {};
		\node at (0,-1) (b) {$c$};
		\node at (0,-2) (e) {$e$};
		\node at (1,0) (x) {};
		\draw (b) edge[-] (e);
	\end{tikzpicture}
}
\subfigure[ $\min_{t\downarrow(S,d)}$]{
	\begin{tikzpicture}[yscale=0.85]
		\node at (0,0) (r) {};
		\node at (0,-1) (b) {};
		\node at (0,-2) (e) {$d$};
		\node at (1,0) (x) {};
	\end{tikzpicture}
}
\subfigure[ $\min_{t\downarrow(S,e)}$]{
	\begin{tikzpicture}[yscale=0.85]
		\node at (0,0) (r) {};
		\node at (0,-1) (b) {};
		\node at (0,-2) (e) {$e$};
		\node at (1,0) (x) {};
	\end{tikzpicture}
}
\caption{Trees used for Example~\ref{ex:min:tree}.\label{fig:min:trees}}
\end{figure}
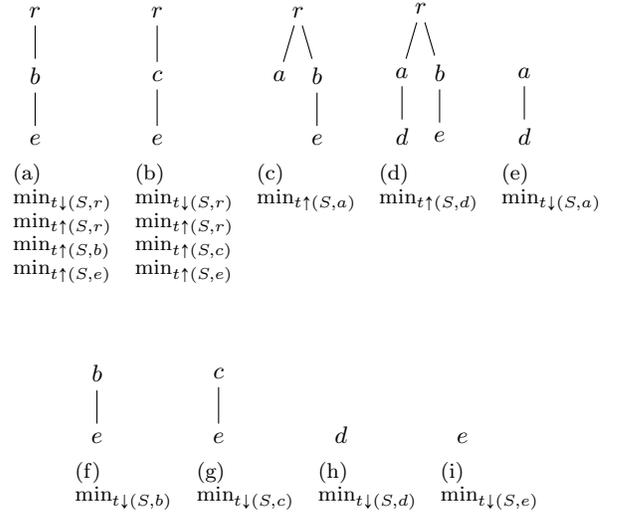
\qed\end{example}
Next, we present the construction of the characteristic sample for learning a DMS from positive examples.
We take a DMS $S=(\root_S,R_S)$ over an alphabet $\Sigma$ and we assume w.l.o.g.\ that any symbol of the alphabet can be present in at least one tree from $L(S)$.
For each $a\in\Sigma$, for each $w\in\cs_{R_S(a)}$, we compute a tree $t$ as follows: we generate a tree $\min_{t\uparrow(S,a)}$, we take the node labeled by $a$ (let it $n_a$), and for any $b\in\Sigma$, while $\ch_t^{n_a}(b)<w(b)$ we fuse in $n_a$ a copy of $\min_{t\downarrow(S,b)}$.
We obtain a sample of cardinality polynomially bounded by the size of the alphabet.
Given a DMS $S$, there may exist many characteristic samples $\cs_S$.
Each of them has the property that, if we construct a sample $D$ which extends $\cs_S$ consistently with $S$, then $\learner_\dms^+(D)$ returns $S$.
This proves the completeness of Algorithm~\ref{alg2}.

We illustrate the construction of the characteristic sample on the schema $S$ from Example~\ref{ex:min:tree}.
Recall that we have already presented the trees $\min_{t\uparrow(S,a)}$ and $\min_{t\downarrow(S,a)}$ for each $a$ from the alphabet.
We also construct the characteristic samples for the disjunctive multiplicity expressions from the rules of $S$:
\begin{itemize}
\item $\cs_{R_S(r)} = \{aab,ab,ac,b,c\}$,
\item $\cs_{R_S(a)} = \{\varepsilon,d\}$,
\item $\cs_{R_S(b)} = \cs_{R_S(c)} = \{e,ee\}$,
\item $\cs_{R_S(d)} = \cs_{R_S(e)} = \{\varepsilon\}$.
\end{itemize}
In Figure~\ref{fig:cs} we present a characteristic sample $\cs_S$ for the DMS $S$ and we explain the purpose of each tree:

\begin{itemize}
\item (a), (b), (c), (d), and (e) ensure that there is inferred the correct rule for the root i.e., $R_S(r)$,
\item (b) and (f) ensure that there is inferred the correct $R_S(a)$,
\item (d) and (g) ensure that there is inferred the correct $R_S(b)$,
\item (e) and (h) ensure that there is inferred the correct $R_S(c)$,
\item The nodes labeled by $d$ and $e$ never have children in the trees from $\cs_S$, so there are inferred the correct rules for $R_S(d)$ and  $R_S(e)$. 
\end{itemize}
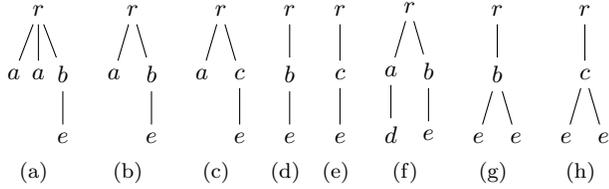
\begin{figure}[h]
\subfigure[]{
	\begin{tikzpicture}[yscale=0.85,xscale=0.65]
		\node at (0,0) (r) {$r$};
		\node at (-0.5,-1) (a) {$a$};
		\node at (0,-1) (a2) {$a$};
		\node at (0.5,-1) (b) {$b$};
		\node at (0.5,-2) (e) {$e$};
		\draw (r) edge[-] (b);
		\draw (r) edge[-] (a);
		\draw (r) edge[-] (a2);
		\draw (b) edge[-] (e);
	\end{tikzpicture}
}
\subfigure[]{
	\begin{tikzpicture}[yscale=0.85,xscale=0.5]
		\node at (0,0) (r) {$r$};
		\node at (-0.5,-1) (a) {$a$};
		\node at (0.5,-1) (b) {$b$};
		\node at (0.5,-2) (e) {$e$};
		\draw (r) edge[-] (b);
		\draw (r) edge[-] (a);
		\draw (b) edge[-] (e);
	\end{tikzpicture}
}
\subfigure[]{
	\begin{tikzpicture}[yscale=0.85,xscale=0.5]
		\node at (0,0) (r) {$r$};
		\node at (-0.5,-1) (a) {$a$};
		\node at (0.5,-1) (b) {$c$};
		\node at (0.5,-2) (e) {$e$};
		\draw (r) edge[-] (b);
		\draw (r) edge[-] (a);
		\draw (b) edge[-] (e);
	\end{tikzpicture}
}
\subfigure[]{
	\begin{tikzpicture}[yscale=0.85]
		\node at (0,0) (r) {$r$};
		\node at (0,-1) (b) {$b$};
		\node at (0,-2) (e) {$e$};
		\draw (r) edge[-] (b);
		\draw (b) edge[-] (e);
	\end{tikzpicture}
}
\subfigure[]{
	\begin{tikzpicture}[yscale=0.85]
		\node at (0,0) (r) {$r$};
		\node at (0,-1) (b) {$c$};
		\node at (0,-2) (e) {$e$};
		\draw (r) edge[-] (b);
		\draw (b) edge[-] (e);
	\end{tikzpicture}
}
\subfigure[]{
	\begin{tikzpicture}[yscale=0.85,xscale=0.5]
		\node at (0,0) (r) {$r$};
		\node at (-0.5,-1) (a) {$a$};
		\node at (-0.5,-2) (d) {$d$};
		\node at (0.5,-1) (b) {$b$};
		\node at (0.5,-2) (e) {$e$};
		\draw (r) edge[-] (b);
		\draw (r) edge[-] (a);
		\draw (b) edge[-] (e);
		\draw (a) edge[-] (d);
	\end{tikzpicture}
}
\subfigure[]{
	\begin{tikzpicture}[yscale=0.85,xscale=0.5]
		\node at (0,0) (r) {$r$};
		\node at (0,-1) (b) {$b$};
		\node at (0.5,-2) (e) {$e$};
		\node at (-0.5,-2) (e2) {$e$};
		\draw (r) edge[-] (b);
		\draw (b) edge[-] (e);
		\draw (b) edge[-] (e2);
	\end{tikzpicture}
}
\subfigure[]{
	\begin{tikzpicture}[yscale=0.85,xscale=0.5]
		\node at (0,0) (r) {$r$};
		\node at (0,-1) (b) {$c$};
		\node at (0.5,-2) (e) {$e$};
		\node at (-0.5,-2) (e2) {$e$};
		\draw (r) edge[-] (b);
		\draw (b) edge[-] (e);
		\draw (b) edge[-] (e2);
	\end{tikzpicture}
}
\caption{\label{fig:cs}Characteristic sample for the schema $S$ from Example~\ref{ex:min:tree}.}
\end{figure}

\noindent We have proposed Algorithm~\ref{alg2}, which is a sound and complete algorithm for learning disjunctive multiplicity schemas from trees positive examples.
Thus, we can state the main result of this section:
\begin{theorem}\label{th:dms:pos}
The concept class $\dms$ is learnable in polynomial time and data from positive examples i.e., in the setting $(\Tree,\dms, L)$.
\end{theorem}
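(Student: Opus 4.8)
The plan is to reduce learnability of $\dms$ to the already-established learnability of $\DME$, lifting the per-expression guarantees of Algorithm~\ref{alg1} through Algorithm~\ref{alg2} one label at a time. Throughout I would rely on the preceding lemma, which states that $\learner_\DME^+$ is sound, complete, runs in polynomial time, and returns a minimal expression.

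First I would dispatch soundness and efficiency, which are immediate. Given a sample $D$ of trees, Algorithm~\ref{alg2} checks that all roots carry the same label $r$ (returning \emph{null} otherwise), and for every $a\in\Sigma$ forms the set $D'$ of children multisets $\ch_t^n$ ranging over all $a$-labelled nodes $n$ in trees $t\in D$, and sets $R_S(a)=\learner_\DME^+(D')$. By the lemma each $R_S(a)$ is consistent with $D'$, i.e.\ $\ch_t^n\in L(R_S(a))$ for every $a$-node $n$; since this holds for all labels and all roots equal $r=\root_S$, every tree in $D$ satisfies $S$, so $D\subseteq L(S)$. The algorithm issues one call to $\learner_\DME^+$ per symbol plus a single traversal of the input to build the sets $D'$, hence runs in polynomial time. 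Minimality transfers the same way: since each $R_S(a)$ is the most specific expression for the observed children, no schema covering $D$ can strengthen a single rule below $R_S(a)$ while still covering $D'$, so $S$ is minimal in the language-inclusion order.

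The substantial part is completeness: for every $S=(\root_S,R_S)$ I must exhibit a characteristic sample $\cs_S$ of cardinality polynomial in $|\Sigma|$ such that every $D$ with $\cs_S\subseteq D\subseteq L(S)$ yields $\learner_\dms^+(D)=S$. Following the construction described above, for each $a\in\Sigma$ and each $w\in\cs_{R_S(a)}$ I build one tree of $\cs_S$: take $\min_{t\uparrow(S,a)}$, locate its $a$-node $n_a$, and for every $b$ fuse copies of $\min_{t\downarrow(S,b)}$ beneath $n_a$ until $\ch_t^{n_a}(b)=w(b)$, so that the children multiset of $n_a$ is exactly $w$. Since each $\cs_{R_S(a)}$ is polynomial in $|\Sigma|$ and there are $|\Sigma|$ rules, $\cs_S$ has polynomially many trees. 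The key property to verify is the \emph{projection}: for any $D$ with $\cs_S\subseteq D\subseteq L(S)$ and any label $a$, the induced set $D'=\{\ch_t^n\mid t\in D,\ \lab_t(n)=a\}$ satisfies $\cs_{R_S(a)}\subseteq D'\subseteq L(R_S(a))$. The upper inclusion holds because $D\subseteq L(S)$; the lower inclusion holds because each $w\in\cs_{R_S(a)}$ occurs as the children multiset of $n_a$ in the tree built for it, hence lies in $D'$. The completeness half of the lemma then forces $\learner_\DME^+(D')$ to be equivalent to $R_S(a)$ for every $a$, and therefore $\learner_\dms^+(D)=S$.

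The main obstacle I anticipate lies precisely in this projection step, namely controlling the side effects of the ambient tree. Each tree of $\cs_S$ contains, besides the distinguished $a$-node, the nodes of $\min_{t\uparrow(S,a)}$ and of the fused copies of $\min_{t\downarrow(S,b)}$, and every such node contributes its own children multiset to the set $D'$ of \emph{its} label. I must argue that these auxiliary observations never over-constrain another rule: every children multiset arising from a minimal witness tree belongs to $L(R_S(b))$ (as the witnesses satisfy $S$) and never forces a smaller maximal-clique partition, a tighter cardinality, or a spurious required set than $R_S(b)$ prescribes. Because the characteristic words injected for label $a$ are confined to the children of the single node $n_a$ while the remainder of the tree is a minimal model of $S$, the conflicts, extended cardinalities, and required sets that Algorithm~\ref{alg1} reads off for each label coincide with those of the corresponding rule; making this independence precise for all labels simultaneously is the crux of the proof.
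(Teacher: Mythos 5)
Your proposal follows essentially the same route as the paper: reduce to the learnability of $\DME$ via the per-label decomposition of Algorithm~\ref{alg2}, and establish completeness by building, for each $a\in\Sigma$ and each $w\in\cs_{R_S(a)}$, a tree from $\min_{t\uparrow(S,a)}$ with copies of $\min_{t\downarrow(S,b)}$ fused under the $a$-node. The ``projection'' property you single out ($\cs_{R_S(a)}\subseteq D'\subseteq L(R_S(a))$ for every label) is exactly what the paper's construction is designed to guarantee, and note that your final worry about auxiliary nodes over-constraining other rules is already discharged by that same property together with the characteristic-sample guarantee of the $\DME$ lemma, so no further independence argument is needed.
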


\section{Learning MS from positive examples}\label{sec:ms}
In this section we show that the MS are learnable from positive examples i.e., in the setting $(\Tree,\ms,L)$.
Recall that the MS allow no disjunction in the rules, in other words they use expressions of the form $a_1^{M_1}\shuffle\ldots\shuffle a_n^{M_n}$.
Due to this very particular form, we can \emph{capture} a MS $S=(\root_S,R_S)$ using a function $\mu:\Sigma\times\Sigma\rightarrow\{0,1,?,+,*\}$ obtained directly from the rules of $S$:
\[
a\rightarrow a_1^{\mu(a,a_1)}\shuffle\ldots\shuffle a_n^{\mu(a,a_n)}.
\]
For example, given the schema $S$ having the root $r$ and the rules:
\begin{gather*}
r\rightarrow a^+\shuffle b, \qquad a\rightarrow b^*, \qquad b\rightarrow a^?\shuffle b^?,
\end{gather*}
we have :
\begin{gather*}
\mu(r,a) = +, \qquad \mu(r,b) = 1, \qquad   ~\mu(r,r) = 0,\\
\mu(a,a) = 0, \qquad ~\mu(a,b) = {*}, \qquad \mu(a,r) = 0,\\
\mu(b,a) = {?},~ \qquad \mu(b,b) = {?}, \qquad ~\mu(b,r) = 0.
\end{gather*}
Note that given the function $\mu(\cdot)$ we can easily construct the initial $S$.
We use this characterization in Algorithm~\ref{alg3}, a polynomial and sound algorithm which learns a minimal MS from a set of trees.
We assume w.l.o.g. that all the trees from the sample have as root label the same label $r$.
If this assumption is not satisfied, the sample is not consistent.
The minimality of the algorithm follows from the minimality of the inferred multiplicity for each pair of labels $(a,b)$, using the function $\minfit(\cdot)$ (cf.\ Section~\ref{sec:dms:pos}).
Moreover, Algorithm~\ref{alg3} is complete.
We can easily construct a characteristic sample of cardinality polynomial in the size of the alphabet by using the same steps provided in the previous section, for unordered words and for trees.
\begin{algorithm}
\caption{\label{alg3}Learning MS from positive examples.}
\ALGORITHM $\learner_\ms^+(D)$\\
\INPUT A set of trees $D = \{t_1,\ldots,t_n\}$ s.t.\ $\lab_{t_i}(\root_{t_i}) = r$ (with $1\leq i\leq n)$\\
\OUTPUT A minimal MS $S$ consistent with $D$\\
\LN \FOR $a\in\Sigma$ \DO\\
\LN \TAB \LET $D'=\{\ch_t^n\mid t\in D.\ n\in N_t.\ \lab_t(n) = a\}$\\
\LN \TAB \FOR $b\in\Sigma$ \DO\\
\LN \TAB \TAB \LET $\mu(a,b) = \minfit(D',b)$\\
\LN \RETURN $S$ having the root label $r$ and captured by $\mu$
\end{algorithm}

\noindent We have proposed a sound and complete algorithm which learns a minimal MS consistent with a set of positive examples, so we can state the following result:
\begin{theorem}\label{th:ms:pos}
The concept class $\ms$ is learnable in polynomial time and data from positive examples i.e., in the setting $(\Tree,\ms, L)$.
\end{theorem}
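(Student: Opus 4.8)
The plan is to show that Algorithm~\ref{alg3}, $\learner_\ms^+$, witnesses the learnability of $\ms$ in the sense of Definition~\ref{def:learnable}. Soundness and the polynomial running time are immediate. If the trees in the input sample $D$ do not all carry the same root label, then no MS can be consistent with $D$ and the algorithm reports this; otherwise, for every pair $(a,b)\in\Sigma\times\Sigma$ the value $\minfit(D',b)$ is by definition the smallest multiplicity $M$ with $w(b)\in\llbracket M\rrbracket$ for all $w\in D'$, so every $a$-node of every tree in $D$ respects the inferred rule and hence $D\subseteq L(S)$. Since $\minfit(\cdot)$ runs in polynomial time and is called $O(|\Sigma|^2)$ times, so does the whole algorithm. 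Minimality in the sense of language inclusion follows pairwise: each child multiplicity is chosen as small as possible, and because a disjunction-free expression treats distinct children independently, no MS whose language is strictly contained in the returned one can still contain $D$.

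The core of the argument is completeness, for which I would construct, for every $S=(\root_S,R_S)$, a characteristic sample $\cs_S$ of cardinality polynomial in $|\Sigma|$, reusing the machinery of Section~\ref{sec:dms:pos}. As there, I assume w.l.o.g.\ that every symbol is realisable in some tree of $L(S)$, and I rely on the minimal trees $\min_{t\uparrow(S,a)}$ and $\min_{t\downarrow(S,b)}$. Since the rules are disjunction-free, the capturing function $\mu$ treats every pair $(a,b)$ independently, so it suffices to place, for each label $a$ and each child label $b$, a small family of witness trees that forces $\minfit$ to return exactly $\mu(a,b)$. Concretely, I take $\min_{t\uparrow(S,a)}$ and, by fusing into its $a$-node the appropriate number of copies of $\min_{t\downarrow(S,b)}$, realise under that $a$-node the occurrence counts of $b$ that pin down $\mu(a,b)$: the count $1$ for $\mu(a,b)=1$; the counts $0$ and $1$ for $\mu(a,b)={?}$; the counts $1$ and $2$ for $\mu(a,b)={+}$; the counts $0$ and $2$ for $\mu(a,b)={*}$; and for $\mu(a,b)=0$ the tree $\min_{t\uparrow(S,a)}$ already witnesses an $a$-node with no $b$-child. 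Collecting these trees over all pairs yields $\cs_S$, whose cardinality is $O(|\Sigma|^2)$ and hence polynomial in the size of the concept.

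It remains to verify that $\learner_\ms^+$ returns a schema equivalent to $S$ on any sample $D$ with $\cs_S\subseteq D\subseteq L(S)$. Two MS are equivalent exactly when they share the root label and induce the same $\mu$, so I only need $\minfit(D',b)=\mu(a,b)$ for each pair $(a,b)$. The witnesses in $\cs_S$ contribute occurrence counts of $b$ whose unique minimal fitting multiplicity is $\mu(a,b)$, so $\minfit(D',b)$ must fit at least these counts; conversely, $D\subseteq L(S)$ forces every count $w(b)$ with $w\in D'$ to lie in $\llbracket\mu(a,b)\rrbracket$, so $\mu(a,b)$ itself fits $D'$, and minimality of $\minfit$ then forces equality. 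I expect the main obstacle to be precisely this bracketing for the multiplicities that demand a count of two, namely $+$ and $*$: one must ensure that $\min_{t\downarrow(S,b)}$ exists and can be fused twice while the surrounding tree stays in $L(S)$, which is exactly where the w.l.o.g.\ realisability assumption is needed. Once this is in place, all conditions of Definition~\ref{def:learnable} hold, and the theorem follows; the whole development is a strictly simpler, disjunction-free specialisation of the proof of Theorem~\ref{th:dms:pos}, with the clique-partition step of Algorithm~\ref{alg1} eliminated.
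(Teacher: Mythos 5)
Your proposal is correct and follows essentially the same route as the paper: it establishes soundness, polynomial running time, and minimality of Algorithm~\ref{alg3} directly from the definition of $\minfit(\cdot)$, and proves completeness by building a characteristic sample from the trees $\min_{t\uparrow(S,a)}$ and $\min_{t\downarrow(S,b)}$ exactly as in the construction of Section~\ref{sec:dms:pos}, specialised to the disjunction-free case. The paper leaves this specialisation implicit (``by using the same steps provided in the previous section''), whereas you spell out the occurrence counts that pin down each multiplicity; this is a faithful elaboration, not a different argument.
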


\section{Impact of negative examples}\label{sec:dms:pos:neg}
In the previous sections, we have considered the settings where the user provides positive examples only.
In this section, we allow the user to additionally specify negative examples.
The main results of this section are that the MS are learnable in polynomial time and data in the presence of both positive and negative examples, while the DMS are not.
We use two symbols $+$ and $-$ to mark whether an example is positive or negative, and we define:
\begin{itemize}
\item $\words^\pm = \words\times\{+,-\}$,
\item $ L^\pm(E) = \{(w,+)~|~w\in L(E)\}\cup\{(w,-)\mid w\in\words~\backslash~L(E)\}$, where $E$ is a disjunctive multiplicity expression,
\item $\Tree^\pm = \Tree\times\{+,-\}$,
\item $ L^\pm(S) = \{(t,+)~|~t\in L(S)\}\cup\{(t,-)\mid t\in\Tree~\backslash~ L(S)\}$, where $S$ is a disjunctive multiplicity schema.
\end{itemize}
Formally, the setting for learning disjunctive multiplicity expressions from positive and negative examples is $(\words^\pm, \DME,  L^\pm)$, while for learning DMS from positive and negative examples we have $(\Tree^\pm, \dms, L^\pm)$.
We obtain analogously the settings for disjunction-free multiplicity expressions and schemas: $(\words^\pm,\ME,L^\pm)$ and $(\Tree^\pm, \ms,L^\pm)$, respectively.

We study the problem of checking whether there exists a concept consistent with the input sample because any sound learning algorithm needs to return \emph{null} if and only if there is no such concept.
Therefore, consistency checking is an easier problem than learning and its intractability precludes learnability.
Formally, given a learning setting $\mathcal K = (\mathcal E,\mathcal C,\mathcal L)$, the $\mathcal K$\emph{-consistency} is the following decision problem:
\[
\CONS_{\mathcal K} = \{D\subseteq \mathcal E\mid\exists c\in\mathcal C.\ D\subseteq\mathcal L(c)\}.
\]
Note that the consistency checking is trivial when only positive examples are allowed.
For instance, if we want to learn disjunctive multiplicity expressions from positive examples over the alphabet $\{a_1,\ldots,a_n\}$, the disjunctive multiplicity expression $a_1^*\shuffle\ldots\shuffle a_n^*$ is always consistent with the examples.
When we also allow negative examples, the problem becomes more complex, particularly in the case of disjunctive multiplicity expressions and schemas, where this problem is not tractable.

First, we show that the consistency checking is tractable for MS.
In Section~\ref{sec:ms}, we have proposed Algorithm~\ref{alg3}, which learns a minimal MS consistent with a set of positive examples.
Note that, given a set of trees, there exists a \emph{unique minimal MS} consistent with them.
The argument is that Algorithm~\ref{alg3} uses the function $\minfit(\cdot)$ (cf.\ Section~\ref{sec:dms:pos}) to infer minimal multiplicities which are unique and sufficient to capture a MS.
Thus, the consistency checking becomes trivial for MS: given a sample containing positive and negative examples, there exists a MS consistent with them iff no tree used as negative example satisfies the minimal MS returned by Algorithm~\ref{alg3}.
Consequently, we easily adapt Algorithm~\ref{alg3} to handle both positive and negative examples and we propose Algorithm~\ref{alg4}.
\begin{algorithm}
\caption{\label{alg4}Learning MS from positive and negative examples.}
\ALGORITHM $\learner_\ms^\pm(D)$\\
\INPUT A sample $D = \{(t,\alpha)\mid t\in\Tree, \alpha\in\{+,-\}\}$\\
\OUTPUT A minimal MS $S$ such that $D\subseteq  L^\pm(S)$, or \emph{null} if no such schema exists\\
\LN \LET $D' = \{t\in\Tree\mid (t,+)\in D\}$\\
\LN \LET $S=\learner_\ms^+(D')$\\
\LN \IF $\exists t\in\Tree.\ (t,-)\in D\wedge t\in L(S)$ \THEN \\
\LN \TAB \RETURN \emph{null}\\
\LN \RETURN $S$
\end{algorithm}

\noindent Essentially, Algorithm~\ref{alg4} returns the minimal schema consistent with the positive examples iff there is no negative example satisfying it, and otherwise it returns \emph{null}.
Note that Algorithm~\ref{alg4} is sound and works in polynomial time in the size of the input.
The completeness of Algorithm~\ref{alg4} follows from the completeness of Algorithm~\ref{alg3}.
Given a MS $S$, we can construct a characteristic sample $\cs_S$ that contains only positive examples, analogously to how it is done for Algorithm~\ref{alg3}.
We have proposed a polynomial, sound, and complete  algorithm which learns minimal MS from positive and negative examples, so we state the first result of this section:
\begin{theorem}\label{th:ms:posneg}
The concept class $\ms$ is learnable in polynomial time and data from positive and negative examples i.e., in the setting $(\Tree^\pm,\ms, L^\pm)$.
\end{theorem}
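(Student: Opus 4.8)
The plan is to verify the three requirements of Definition~\ref{def:learnable} for the algorithm $\learner_\ms^\pm$ (Algorithm~\ref{alg4}), leaning on what was already established in the positive-only setting. Polynomiality is immediate: line~2 invokes $\learner_\ms^+$, which is polynomial by Theorem~\ref{th:ms:pos}, and the test on line~3 amounts to checking, for each tree $t$ with $(t,-)\in D$, whether $t\models S$, which is done in polynomial time by comparing at every node the children multiset against the multiplicities captured by $\mu$.

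The crux is soundness, and in particular the correctness of returning \emph{null}. First I would isolate the key structural lemma: for any set of trees $D'$, the schema $S=\learner_\ms^+(D')$ is not merely \emph{a} minimal MS consistent with $D'$, but the \emph{unique} one, and moreover $L(S)\subseteq L(S')$ holds for every MS $S'$ with $D'\subseteq L(S')$. This follows because $\minfit(D',b)$ returns the unique smallest multiplicity $M$ whose $\llbracket M\rrbracket$ contains all observed counts; hence for every consistent $S'$ and every pair $(a,b)$ we have $\llbracket\mu(a,b)\rrbracket\subseteq\llbracket\mu'(a,b)\rrbracket$, so $L(R_S(a))\subseteq L(R_{S'}(a))$ for each label $a$, and therefore $L(S)\subseteq L(S')$ at the level of trees. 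This containment is exactly the property that disjunction-free expressions enjoy but disjunctive ones do not (cf.\ Example~\ref{ex:minimal}), and it is the reason the result holds for MS while failing for DMS.

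Granting this lemma, soundness follows. Let $D'=\{t\mid(t,+)\in D\}$ and $S=\learner_\ms^+(D')$. If Algorithm~\ref{alg4} returns $S$, then no negative example lies in $L(S)$ while $D'\subseteq L(S)$, so $D\subseteq L^\pm(S)$ and $S$ is consistent with the whole sample. If instead it returns \emph{null}, some negative $(t,-)\in D$ satisfies $t\in L(S)$; by the lemma $t\in L(S')$ for \emph{every} MS $S'$ consistent with $D'$, so no such $S'$ can exclude $t$, and hence no MS at all is consistent with $D$ — so \emph{null} is indeed the correct answer.

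For completeness I would reuse the characteristic sample built in Section~\ref{sec:ms} for Theorem~\ref{th:ms:pos}, which consists of positive examples only and has cardinality polynomial in the size of the alphabet. Given a target MS $S$, set $\cs_S$ to this sample. For any $D$ with $\cs_S\subseteq D\subseteq L^\pm(S)$, its positive part $D'$ satisfies $\cs_S\subseteq D'\subseteq L(S)$, so $\learner_\ms^+(D')=S$ by the completeness of Algorithm~\ref{alg3}; and since $D\subseteq L^\pm(S)$, every negative example lies outside $L(S)$, so the test on line~3 fails and Algorithm~\ref{alg4} returns $S$. I expect the uniqueness-and-containment lemma of the second paragraph to be the only nontrivial step; once it is in place, the soundness and completeness arguments are routine.
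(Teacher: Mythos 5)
Your proposal matches the paper's argument: both rest on the observation that $\learner_\ms^+$ returns the \emph{unique} minimal MS consistent with the positive examples (because $\minfit$ yields the unique least multiplicity for each pair of labels), so a consistent MS exists iff no negative example satisfies that minimal schema, and completeness is inherited by reusing the positive-only characteristic sample of Theorem~\ref{th:ms:pos}. Your write-up is in fact somewhat more explicit than the paper's, spelling out the containment $L(S)\subseteq L(S')$ that justifies returning \emph{null}, but the decomposition and key ideas are the same.
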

Next, we prove that the concept class $\dms$ is not learnable in polynomial time and data in the setting $\dms^\pm=(\Tree^\pm, \dms, L^\pm)$.
For this purpose, we first show the intractability of learning disjunctive multiplicity expressions from positive and negative examples i.e., in the setting $\DME^\pm = (\words^\pm,\DME,L^\pm)$.
We study the complexity of checking the consistency 
of a set of positive and negative examples and we prove the intractability of $\CONS_{\DME^\pm}$.
Intuitively, this follows from the fact that, given a set of unordered words, there may exist an exponential number of minimal consistent disjunctive multiplicity expressions, and we may need to check all of them to decide whether there exist negative examples satisfying them.
Formally, we have the following result:
\begin{lemma}\label{lemma:cons}
$\CONS_{\DME^\pm}$ is NP-complete.
\end{lemma}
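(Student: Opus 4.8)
The plan is to first show $\CONS_{\DME^\pm} \in \mathrm{NP}$. A certificate is simply a candidate disjunctive multiplicity expression $E$ over $\Sigma$. Since repetitions of symbols are disallowed among the disjunctions, $E$ has size polynomial in $|\Sigma|$, so it is a polynomially-bounded certificate. Given $E$ and a sample $D$ of marked unordered words, I can verify in polynomial time that every positive example satisfies $E$ and no negative example does, because membership $w\models E$ is checkable in PTIME (e.g.\ via the characterizing triple $(C_E,N_E,P_E)$, each component of which is computable and testable in PTIME as noted in the preliminaries). Thus the verification is polynomial and $\CONS_{\DME^\pm}\in\mathrm{NP}$.

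\smallskip
\noindent\textbf{NP-hardness: the reduction.}
For hardness I would reduce from a known NP-complete problem. The natural choice, given that the goal is to force a choice among exponentially many minimal consistent expressions, is \textsc{3SAT} (the macros $\SAT$, $\CNF$ are already reserved, suggesting this is the intended source). The intuition highlighted before the statement is the key: for a set of positive unordered words there may be exponentially many minimal consistent disjunctive multiplicity expressions, and the conflicting-pairs structure $C_E$ encodes which symbols are grouped into disjunctions. I plan to encode a \textsc{3SAT} instance $\varphi$ over variables $x_1,\dots,x_n$ into an alphabet containing, for each variable $x_i$, a pair of symbols (one for the positive literal, one for the negative). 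The positive examples would be designed so that every minimal consistent expression must, for each variable, place its two literal-symbols into a disjunction in one of two ways (corresponding to setting $x_i$ true or false) --- this is exactly where the exponential multiplicity of minimal expressions is exploited, so that candidate expressions correspond bijectively to truth assignments. The negative examples would then encode the clauses: for each clause I add a negative example witnessing the unordered word that would be accepted precisely by those expressions whose encoded assignment fails that clause, so that an expression is consistent with all negative examples if and only if its assignment satisfies every clause.

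\smallskip
\noindent\textbf{Correctness of the reduction.}
Having fixed the encoding, I would prove the two directions. For soundness of the reduction, given a satisfying assignment of $\varphi$, I build the corresponding disjunctive multiplicity expression $E$ (grouping each variable's two symbols according to the assignment, with multiplicities forced by $\minfit$ on the positive examples) and argue that $E$ accepts all positive examples and rejects all negative ones, using the correspondence between clause-gadgets and assignment-failures. Conversely, given any $E$ consistent with the whole sample, I would extract a truth assignment from the disjunctive grouping structure of $E$ (reading off $C_E^*$), and show that consistency with each clause's negative example forces the extracted assignment to satisfy that clause. Both directions are polynomial-time constructions, completing the reduction.

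\smallskip
\noindent\textbf{Main obstacle.}
The hard part will be designing the positive examples so that the set of \emph{minimal} consistent expressions is exactly the set of assignment-encoding expressions --- no more and no fewer --- while keeping the sample size polynomial. In particular I must ensure that the conflict graph built by Algorithm~\ref{alg1}-style reasoning (i.e.\ the graph of candidate conflicting siblings induced by $D$) admits exactly the maximal-clique partitions corresponding to truth assignments, and that the forced multiplicities from $\minfit$ do not accidentally collapse distinct assignments or admit spurious consistent expressions that evade the clause gadgets. Getting the gadget symbols, their co-occurrence patterns in the positive words, and the multiplicities to interact correctly is the delicate combinatorial core of the proof; the NP membership and the high-level reduction skeleton are routine by comparison.
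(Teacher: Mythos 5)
Your outline coincides with the paper's own proof in every structural respect: membership in NP is argued identically (guess an expression, necessarily of size polynomial in $|\Sigma|$ because symbols cannot repeat across disjunctions, then check consistency in PTIME), and the hardness reduction is indeed from $\SAT$ with one pair of symbols per variable, positive examples forcing every consistent expression to encode a truth assignment, and one negative example per clause. The problem is that what you defer as ``the delicate combinatorial core'' is the entire mathematical content of the lemma: as written, the proposal contains no sample $D_\varphi$ that could be checked, so the two correctness directions you promise to prove have nothing to be proved about. That is a genuine gap, though one of omission rather than of a wrong idea.

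For comparison, the paper's gadget is quite compact. Over $\Sigma=\{t_1,f_1,\ldots,t_n,f_n\}$ the positive examples are $t_1f_1\cdots t_nf_n$ and each $t_if_i$; the negative examples are $\varepsilon$, each $t_it_if_if_i$, and, for every clause $c_j$, the word $w_j$ obtained by writing \emph{twice} the symbol encoding the falsifying value of each of its three literals ($t_{jl}$ if the literal is negated, $f_{jl}$ otherwise). The long positive word eliminates all conflicts between distinct variable pairs and lets every symbol occur once; rejecting $\varepsilon$ forces a non-nullable disjunct, which each positive $t_if_i$ forces to intersect every pair $\{t_i,f_i\}$; and rejecting $t_it_if_if_i$ forbids both $t_i$ and $f_i$ from being repeatable. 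Together these force any consistent expression into the form $E_V=(v_1\mid\cdots\mid v_n)^+\shuffle\overline{v_1}^{\,?}\shuffle\cdots\shuffle\overline{v_n}^{\,?}$ with $\{v_i,\overline{v_i}\}=\{t_i,f_i\}$, i.e., a valuation; and $w_j$ is accepted by $E_V$ exactly when all three literals of $c_j$ are falsified under $V$, so consistency holds iff $\varphi$ is satisfiable. Note one point your sketch gets slightly wrong: the negative examples alone cannot ``encode which expressions fail a clause'' unless the positive words have already pinned down the repeatable symbol of each pair, which is precisely what the doubled words $t_it_if_if_i$ and the clause words' doubled literals exploit. Until your construction reaches this level of concreteness, the proof is not complete.
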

\begin{proof}\normalfont
We prove the NP-hardness by reduction from $\SAT$ which is known as being NP-complete.
We take a formula $\varphi$ in 3CNF containing the clauses $c_1,\ldots,c_k$ over the variables $x_1,\ldots,x_n$.
We generate a sample $D_\varphi$ over the alphabet $\Sigma=\{t_1, f_1,\ldots,t_n,f_n\}$ such that:
\begin{itemize}
\item $(t_1f_1\ldots t_nf_n, +) \in D_\varphi$,
\item $(\varepsilon, -) \in D_\varphi$,
\item $(t_if_i, +), (t_it_if_if_i, -)\in D_\varphi$, for $1\leq i\leq n$,
\item $(w_j, -)\in D_\varphi$, where $w_j= v_{j1}v_{j1}v_{j2}v_{j2}v_{j3}v_{j3}$, for any $j$ such that $1\leq j\leq k$, where $x_{j1}, x_{j2}, x_{j3}$ are the literals used in the clause $c_j$ and for any $l$ such that $1\leq l\leq 3$, $v_{jl}$ is $t_{jl}$ if $x_{jl}$ is a negative literal in $c_j$, and $f_{jl}$ otherwise.
\end{itemize}
For example, for the formula $(x_1\vee\neg x_2\vee x_3)\wedge(\neg x_1\vee x_3\vee \neg x_4)$, we generate the sample:
\[
\begin{tabular}{cc}
$(t_1f_1t_2f_2t_3f_3t_4f_4, +)$, & $(\varepsilon, -)$,\\
$(t_1f_1,+)$,& $(t_1t_1f_1f_1, -)$,\\
$(t_2f_2,+)$,& $(t_2t_2f_2f_2, -)$,\\
$(t_3f_3,+)$,&$(t_3t_3f_3f_3, -)$,\\
$(t_4f_4,+)$,&$(t_4t_4f_4f_4, -)$,\\
& $(f_1f_1t_2t_2f_3f_3, -)$,\\
& $(t_1t_1f_3f_3t_4t_4, -)$.
\end{tabular}
\]
For a given $\varphi$, a valuation is a function $V:\{x_1,\ldots,x_n\}\rightarrow \{\true,\false\}$.
Each of the $2^n$ possible valuations encodes a minimal disjunctive multiplicity expression $E_V$ consistent with the positive examples from $D_\varphi$, constructed as follows:
\[
E_V = (v_1\mid\ldots\mid v_n)^+\shuffle \overline{v_1}^?\shuffle \ldots \shuffle\overline{v_n}^?,
\]
where, for $1\leq i\leq n$, if $V(x_i)=\true$ then $v_i=t_i$ and $\overline{v_i}=f_i$.
Otherwise, $v_i=f_i$ and $\overline{v_i}=t_i$.
Next, we show that, for any valuation $V$, $V\models\varphi$ iff $E_V$ is consistent with $D_\varphi$.

For the \emph{only if} case, consider a valuation $V$ such that $V\models\varphi$ and we take the corresponding expression $E_V = (v_1\mid\ldots\mid v_n)^+\shuffle \overline{v_1}^?\shuffle \ldots \shuffle\overline{v_n}^?$.
Note that $t_1f_1\ldots t_nf_n$ and all $t_if_i$'s (with $1\leq i\leq n$) satisfy $E_V$, while $\varepsilon$ does not satisfy $E_V$.
Also note that for $1\leq i\leq n$, one symbol between $t_i$ and $f_i$ occurs at least once, while the other occurs at most once, so all $t_it_if_if_i$'s do not satisfy $E_V$.
Assume that there is a $w_j$ (with $1\leq j\leq k$) such that $w_j$ satisfies $E_V$, which by construction implies that the clause $c_j$ is not satisfied by the valuation $V$, which implies a contradiction.
Hence, $w_j$ does not satisfy $E_V$ for any $1\leq j\leq k$.
Therefore, $E_V$ is consistent with $D_\varphi$.

For the \emph{if} case, we assume that $E_V$ is consistent with the sample $D_\varphi$.
Since the $w_j$'s (with $1\leq j\leq k$) encode the valuations making the clauses $c_j$'s false and none of the $w_j$'s satisfies $E_V$, then the valuation $V$ encoded in $E_V$ makes the formula $\varphi$ satisfiable.

The construction of $D_\varphi$ also ensures that if there exists a disjunctive multiplicity expression consistent with $D_\varphi$, it has the form of $E_V$.
Therefore, $\varphi\in\SAT$ iff $D_\varphi \in \CONS_{\DME^\pm}$.

To prove the membership of $\CONS_{\DME^\pm}$ to NP, we point out that a Turing machine guesses a  disjunctive multiplicity expression $E$, whose size is linear in $|\Sigma|$ since repetitions are discarded among the disjunctions of $E$.
Moreover, checking whether $E$ is consistent with the sample can be easily done in polynomial time.
\qed\end{proof}
We extend the above result to $\CONS_{\dms^\pm}$:
\begin{corollary}
$\CONS_{\dms^\pm}$ is NP-complete.
\end{corollary}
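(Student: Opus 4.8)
The plan is to establish both NP-membership and NP-hardness, obtaining the latter by a reduction from $\CONS_{\DME^\pm}$, which is NP-complete by Lemma~\ref{lemma:cons}.

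For membership in NP, I would have a nondeterministic machine guess a DMS $S=(\root_S,R_S)$. Since every symbol occurs at most once in a disjunctive multiplicity expression, each rule $R_S(a)$ has size linear in $|\Sigma|$, and there are $|\Sigma|$ rules, so the guessed certificate $S$ has size polynomial in the input. It then remains to verify $D\subseteq L^\pm(S)$, i.e., that every positive tree satisfies $S$ and every negative tree does not. Checking $t\models S$ is done in PTIME: at each node $n$ one computes the children multiset $\ch_t^n$ and tests its membership in $L(R_S(\lab_t(n)))$, and membership of an unordered word in a disjunctive multiplicity expression is decidable in PTIME. Hence the whole verification runs in polynomial time and $\CONS_{\dms^\pm}\in\mathrm{NP}$.

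For NP-hardness, I would reduce from $\CONS_{\DME^\pm}$. Given a word sample $D$ over $\Sigma$, introduce a fresh root label $r\notin\Sigma$ and encode each unordered word $w$ as the depth-one tree $t_w$ whose root is labeled $r$ and which has exactly $w(a)$ children labeled $a$ (all leaves), for every $a\in\Sigma$; the reduced tree sample is $D'=\{(t_w,\alpha)\mid (w,\alpha)\in D\}$, computable in polynomial time. I then claim $D\in\CONS_{\DME^\pm}$ iff $D'\in\CONS_{\dms^\pm}$. The forward direction is immediate: from a DME $E$ consistent with $D$, build the DMS $S$ with $\root_S=r$, $R_S(r)=E$, and $R_S(a)=\epsilon$ for every $a\in\Sigma$; each positive $t_w$ then satisfies $S$ because its root's children multiset is $w\in L(E)$ while every leaf has empty content matching $\epsilon$, and each negative $t_w$ is rejected already at the root since $w\notin L(E)$.

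The delicate direction, and the step I expect to be the main obstacle, is the converse: extracting from a consistent DMS $S$ a DME consistent with $D$. The natural candidate is $E=R_S(r)$, and it correctly classifies all positive words, since $t_w\models S$ forces $w\in L(R_S(r))$. The subtlety is that a negative tree $t_w$ might be rejected not at the root but because some leaf label $a\in w$ satisfies $\varepsilon\notin L(R_S(a))$, leaving open the possibility that $w\in L(E)$. To close this gap I would observe that any such ``bad'' label $a$ cannot occur in any positive word (otherwise the corresponding positive tree would also be rejected at that leaf), and then either (i)~note that it suffices to prove hardness on instances in which every symbol occurs in some positive example---which is exactly the situation in the sample $D_\varphi$ of Lemma~\ref{lemma:cons}, where each $t_i,f_i$ appears in the positive word $t_if_i$---so that all leaf rules are forced to accept $\varepsilon$ and every rejection must happen at the root; or (ii)~perform surgery on $E$ by deleting from its disjunctions every symbol absent from all positive words, which preserves all positive words in the language while rejecting every word that mentions a bad symbol. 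Either way $E$ (or its surgically modified version) is consistent with $D$, the equivalence holds, and the corollary follows.
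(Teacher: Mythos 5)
Your proof is correct and follows essentially the same route as the paper: NP-membership by guessing a polynomial-size schema and verifying consistency in PTIME, and NP-hardness by encoding each unordered word as a flat tree with a common root label and reducing from $\CONS_{\DME^\pm}$. The paper dispatches the reduction's correctness in one line, whereas you explicitly identify and repair the only delicate point (a negative tree being rejected at a leaf rather than at the root); your fix via instances where every symbol occurs in some positive word, as in $D_\varphi$, is sound.
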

\begin{proof}\normalfont
The NP-hardness of $\CONS_{\DME^\pm}$ implies the NP-hardness of $\CONS_{\dms^\pm}$:
it is sufficient to consider flat trees having all the same root label. 
Moreover, to prove the membership of $\CONS_{\dms^\pm}$ to NP, a Turing machine guesses a disjunctive multiplicity schema $S$, whose size is polynomial in $|\Sigma|$, and checks whether $S$ is consistent with the sample (which can be done in polynomial time).
\qed\end{proof}
Since consistency checking in the presence of positive and negative examples is intractable for DMS, we conclude that:
\begin{theorem}\label{th:dms:pos-neg}
Unless P = NP, the concept class $\dms$ is not learnable in polynomial time and data from positive and negative examples i.e., in the setting $(\Tree^\pm, \dms,  L^\pm)$.
\end{theorem}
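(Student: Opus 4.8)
The plan is to obtain the non-learnability as an immediate consequence of the intractability of consistency checking, exploiting the soundness requirement built into Definition~\ref{def:learnable}. The crucial observation is that soundness forces any learning algorithm to double as a decision procedure for consistency: for any sample $D$, a sound $\learner(D)$ returns a concept consistent with $D$ whenever one exists, and returns \emph{null} exactly when no consistent concept exists. Thus the existence of a polynomial, sound learner is tightly coupled to the tractability of $\CONS_{\dms^\pm}$.

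First, I would argue by contradiction, assuming that $\dms$ is learnable in polynomial time and data in the setting $(\Tree^\pm,\dms,L^\pm)$. By definition this supplies a polynomial-time learning algorithm $\learner$ that is sound. Second, I would turn $\learner$ into a polynomial-time decision procedure for $\CONS_{\dms^\pm}$: on input $D$, run $\learner(D)$ and accept if and only if the output differs from \emph{null}. By soundness this correctly decides whether some DMS is consistent with $D$, and it runs in polynomial time because $\learner$ does. Third, I would invoke the preceding Corollary, which establishes that $\CONS_{\dms^\pm}$ is NP-complete; combining it with the polynomial-time decision procedure just built would yield P $=$ NP, contradicting the hypothesis. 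Hence, unless P $=$ NP, no such learner can exist.

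The substantive work underlying this argument has already been carried out, so the present theorem is essentially a corollary of it, and I expect no real obstacle in the theorem itself. The genuinely difficult step lies earlier, in the NP-hardness reduction from $\SAT$ of Lemma~\ref{lemma:cons}, which encodes each truth valuation as a distinct minimal disjunctive multiplicity expression $E_V$ and arranges the negative examples so that some $E_V$ avoids all of them precisely when the corresponding valuation satisfies the formula. Once that intractability is established and lifted from expressions to schemas through the Corollary (by considering flat trees sharing a common root label), the only remaining point to emphasize is the structural link between soundness and consistency checking, which requires no further computation.
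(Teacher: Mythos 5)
Your proposal is correct and follows exactly the paper's route: the paper likewise derives the theorem directly from the NP-completeness of $\CONS_{\dms^\pm}$, observing that any sound polynomial-time learner would decide consistency (return \emph{null} iff no consistent schema exists) and hence yield P $=$ NP. The reduction to the corollary on $\CONS_{\dms^\pm}$ and the soundness-based argument match the paper's reasoning precisely.
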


\section{Conclusions and future work}\label{sec:conclusions}
We have studied the problem of learning unordered XML schemas from examples given by the user. 
We have investigated the learnability of DMS and MS in two settings: one allowing positive examples only, and one that allows both positive and negative examples.
To the best of our knowledge, no research has been done on learning unordered XML schema formalisms, nor on allowing both positive and negative examples in the process of schema learning.
We have proven that the DMS are learnable only from positive examples, and we have shown that they are not learnable from positive and negative examples by using the intractability of the consistency checking.
Moreover, we have proven that the MS are learnable in both settings: from only positive examples, and also from positive and negative examples.
For all the learnable cases we have proposed learning algorithms that return minimal schemas consistent with the examples.

As future work, we want to use a more specific learnability condition i.e., to require the size (instead of the cardinality) of the characteristic sample to be polynomial in the size of the alphabet.
Thus, we will fully adhere to the classical definition of the characteristic sample in the context of grammatical inference~\cite{Higuera97}.
Our preliminary research indicates that we are able to do this by using a compressed representation of the XML documents with directed acyclic graphs~\cite{LoMaNo13}.
The learning algorithms that we propose in this paper will work without any alteration.
Moreover, we would like to extend our learning algorithms for more expressive unordered schemas, for instance schemas which allow \emph{numeric occurrences}~\cite{KiTu07} of the form $a^{[n,m]}$ that generalize multiplicities by requiring the presence of at least $n$ and at most $m$ elements $a$.
Additionally, we want to use the learning algorithms for unordered schemas to boost the existing learning algorithms for twig queries~\cite{StWi12}.
For this purpose, we have to investigate first the problem of query minimization~\cite{ACLS02} in the presence of DMS.
Next, we want to propose a twig query learning algorithm which infers the schema of the documents and then it uses the schema to improve the quality of the learned twig query.

\end{document}